\patchcmd\longtable{\par}{\if@noskipsec\mbox{}\fi\par}{}{}
\def\maxwidth{\ifdim\Gin@nat@width>\linewidth\linewidth\else\Gin@nat@width\fi}
\def\maxheight{\ifdim\Gin@nat@height>\textheight\textheight\else\Gin@nat@height\fi}
\def\fps@figure{htbp}
\newtheorem{theorem}{Theorem}
\author[]{Jungin E. Kim}
\author[]{Yan Wang}
\affil[]{Georgia Institute of Technology, Atlanta, GA 30332, USA 
\\E-mail: {yan-wang@gatech.edu} }
\title{Variational Quantum Algorithm for Constrained Topology Optimization}
\begin{document}

\maketitle

\section*{Abstract}
One of the challenging scientific computing problems is topology optimization, where searching through the combinatorially complex configurations and solving the constraints of partial differential equations need to be done simultaneously. In this paper, a novel variational quantum algorithm for constrained topology optimization is proposed, which allows for the single-loop parallel search for the optimal configuration that also satisfies the physical constraints. The optimal configurations and the solutions to physical constraints are encoded with two separate registers.  A constraint encoding scheme is also proposed to incorporate volume and connectivity constraints in optimization. The gate complexity of the proposed quantum algorithm is analyzed. The algorithm is demonstrated with compliance minimization problems including truss structures and Messerschmitt-B\"{o}lkow-Blohm beams.

\section{Introduction}
\label{sec:introduction}

Topology optimization (TO) is the process of finding the optimal material configuration of a structure to achieve some desirable properties \cite{bendsoe2013topology}. It has a wide variety of applications, such as microfluidic reactors \cite{okkels2007scaling}, photonic crystals \cite{lin2016enhanced}, topological insulators \cite{christiansen2019topological}, cloaking \cite{fujii2020dc},  acoustic barriers \cite{xu2020topology}, light-weight structures \cite{hagishita2009topology}, multi-material composites \cite{blasques2012multi}, and many more. 
However, solving TO problems is challenging for two reasons. First, searching through a combinatorially large number of possible configurations is expensive. TO researchers developed density-based methods to convert the discrete problem to the optimization with continuous variables so that classical optimizers can be applied. Yet, the challenge of high dimensionality for the searching space remains.
Second, the evaluation of design objective for each configuration requires solving partial differential equations (PDEs) to obtain the quantities of interest and often relies on expensive simulations.  Therefore, computational efficiency remains a major challenge in TO.

In the past decade, quantum scientific computing has emerged as a new paradigm to solve difficult scientific computing problems on quantum computers, such as engineering simulation \cite{wang2013simulating, wang2016accelerating}, optimization \cite{ wang2014global, wang2023opportunities}, materials design \cite{bauer2020quantum,von2021quantum}, drug discovery \cite{cao2018potential}, and others.  It has the potential to significantly improve the efficiency of TO, because each quantum basis state can naturally encode a configuration in the exponentially large design solution space \cite{wang2023opportunities}. 

Recently, both quantum annealers and circuit-based quantum computers started being used to solve TO problems. In quantum annealing \cite{morita2008mathematical},  problems are formulated as quadratic unconstrained binary optimization (QUBO). The quantum system is perturbed in order to transition from the initial state to the target ground state,  utilizing the adiabatic quantum computation scheme \cite{albash2018adiabatic}. The ground state represents the optimal solution. Ye et al. \cite{ye2023quantum} decomposed TO problems into sequences of mixed-integer nonlinear programs, where each problem is reformulated as QUBO. Honda et al. \cite{honda2024development} optimized the topology of two- and three-dimensional truss structures, where the nodal displacements are expressed as binary strings. Although QUBO formulations allow quantum annealing to be applied to encode material configurations efficiently, the PDE constraints in TO involve continuous variables and impose the challenge of binary representation to quantum annealers. Among the circuit-based computational approaches, variational quantum algorithms (VQAs) such as quantum approximate optimization algorithm (QAOA) \cite{farhi2014quantum, hadfield2019quantum} and variational quantum eigensolver \cite{peruzzo2014variational} are the most suitable ones to solve real-world problems on near-term quantum computers. 
In these hybrid quantum-classical approaches, computation is performed by evolving the quantum state through parameterized quantum gates. The parameters are optimized by classical optimizers iteratively in the outer loop. 
Kim and Wang \cite{kim2023topology} applied a quantum approximate Bayesian optimization algorithm \cite{kim2023quantum} to perform TO for trusses and metamaterials.  A surrogate-based global optimizer, Bayesian optimization, is used to train the parameters of the variational quantum circuit. Sato et al. \cite{sato2023quantum} proposed a double-loop optimization framework which involves two quantum circuits. The first circuit is used to solve the PDE constraints for all possible configurations. The second circuit is to maximize the probability of obtaining the optimal configuration. 

In this paper, we propose a new VQA to solve TO problems efficiently through quantum entanglement. Two quantum registers are used to encode the two types of values in TO, respectively.  The first type is the optimal configuration as the solution of TO, whereas the second one represents physical quantities which satisfy the PDE constraints. Two entanglement mechanisms are introduced to compute both types of values concurrently. First, the problem Hamiltonian operators corresponding to the PDE constraints in the VQA circuit are entangled with the first register. Second, the qubits in the second register are also entangled to improve the searching efficiency and reduce the circuit depth. In addition, a constraint encoding scheme is proposed to incorporate connectivity and volume constraints of the optimization problems as constraints of VQA parameters. 

Researchers have developed approaches to include constraints in quantum optimization. The most common approach is to incorporate constraints as the regularization or penalty terms in the objective function. The problem Hamiltonian is subsequently altered, and the ground state corresponds to the optimal solution which is likely to satisfy the constraints. This approach is for problems involving soft constraints. For instance, Hen and Sarandy \cite{hen2016driver} formulated the Hamiltonian as the sum of the original Hamiltonian and the operators that encode the constraints. Fern\'andez-Lorenzo et al. \cite{fernandez2021hybrid} used the variational quantum eigensolver and QAOA to find the ground state of an Ising Hamitonian which represents the Lagrangian of a QUBO problem.  Djidjev \cite{djidjev2023logical} used quantum annealing to minimize a sum of objective, constraint, and the squared constraint functions. Monta\~nez-Barrera et al. \cite{montanez2024unbalanced} solved QUBO problems with quantum annealing and QAOA, where the constraint penalty term is formulated as an exponential function. A second approach to incorporate constraints is to design quantum operators that only search within the feasible solution space. In QAOA, this is achieved by designing mixing Hamiltonians. One example is the XY mixing Hamiltonian which restricts the feasible basis states on a graph \cite{wang2020xy, hadfield2022analytical}. The second example is to design the mixing Hamiltonian in QAOA or the driver Hamiltonian for quantum annealing such that it commutes with the operators associated with linear constraints \cite{leipold2021constructing}. A third instance is a continuous-time quantum walk operator \cite{wang2021quantum, Qu_2024}, where a graph adjacency matrix restricts the search process. The third approach to incorporate constraints is to treat objective and constraints separately and formulate the problems as multi-objective optimization. For instance, D\'iez-Valle et al. \cite{diezvalle2023multiobjectivevqa} proposed VQAs to solve the objective minimization and constraint satisfaction problems simultaneously.  

The proposed constraint encoding scheme here is different from the existing approaches. The design constraints are mapped to the constraints of the parameters in the variational quantum circuit. Particularly, the volume constraints in TO are enforced by setting an upper bound for the parameters of the uncontrolled rotation gates, whereas the connectivity constraints are encoded in the controlled rotation gates such that the amplitudes of infeasible configurations are minimized.

The remainder of the paper is structured as follows. In Section \ref{sec:methodology}, the proposed TO formalism is introduced. In Section \ref{sec:constraintencoding}, the quantum circuit which encodes the connectivity and volume constraints is described. In Section \ref{sec:results}, the proposed TO approach is demonstrated with five examples.  The first three examples involve minimizing the compliance of truss structures, whereas the other two examples involve minimizing the compliance of a Messerschmitt-B\"{o}lkow-Blohm (MBB) beam. Discussions and conclusions are provided in Section \ref{sec:conclusions}.
\vspace{-8pt}

\section{Proposed Variational Quantum Topology Optimization Algorithm} \label{sec:methodology}

In TO problems, the spatial domain is typically discretized as elements. Choosing the material configuration is equivalent to choosing a subset of elements so that the best objective value is obtained. The proposed VQA performs the tasks of finding the optimal material configuration and solving the PDE constraints simultaneously. The parameterized quantum circuit includes two registers. The first register encodes the material configurations, whereas the second one encodes the PDE solutions. Each PDE solution is entangled with the corresponding material configuration. This is done through a controlled problem Hamiltonian operator, which is implemented with a sequence of controlled operators applied on the second register. A sequence of controlled and uncontrolled rotation gates are also applied so that the qubits in the second register become entangled. The entanglement between the qubits reduces the size of the solution space.

\subsection{Problem Hamiltonian formulation} \label{sec:problemhamiltonian}

Without the loss of generality, the truss compliance minimization problem is used to illustrate the formulation of problem Hamiltonians.
The compliance minimization problem for truss structure design is typically formulated as
\begin{equation} \label{eq:compliancemin}
\begin{gathered}
    \min_{\boldsymbol{q}, \boldsymbol{u}} \boldsymbol{u}^{\intercal} K(\boldsymbol{q})\boldsymbol{u} \\
    \text{s.t.} \quad 
    K(\boldsymbol{q})\boldsymbol{u} = \Tilde{\boldsymbol{f}},
\end{gathered}
\end{equation}
where $\boldsymbol{u} \in \mathbb{R}^N$ is the displacement vector, $N$ is the number of degrees of freedom, $\boldsymbol{q}$ is the material configuration, $K(\boldsymbol{q}) \in \mathbb{R}^{N \times N}$ is the total stiffness matrix of the structure which depends on configuration $\boldsymbol{q}$, and $\Tilde{\boldsymbol{f}} \in \mathbb{R}^N$ is the load vector. The linear constraint in Eq. (\ref{eq:compliancemin}) is obtained from the weak form of PDEs in the finite-element formulation. In the VQA, the discrete variable of material configurations $\boldsymbol{q}$ is encoded as the state of the first quantum register, whereas the displacement vector $\boldsymbol{u}$ is encoded as the amplitudes of the second register.  

In a double-loop TO process, the constraint in Eq. (\ref{eq:compliancemin}) is solved in the inner loop, whereas the optimization is performed in the outer loop. To solve the problem in a single-loop TO process, the constraint in Eq. (\ref{eq:compliancemin}) needs to be combined into the objective as the regularization term. It is important to obtain a quadratic form of objective function so that the Hamiltonian can be constructed. Therefore, the compliance minimization problem is reformulated as
\begin{equation} \label{eq:complianceminlagrangian}
\begin{gathered}
    \min_{\boldsymbol{q}, \boldsymbol{u}, \boldsymbol{f}, \boldsymbol{\alpha}} \boldsymbol{u}^{\intercal} K(\boldsymbol{q})\boldsymbol{u} + \lambda \Vert K(\boldsymbol{q})\boldsymbol{u} - \Gamma(\boldsymbol{\alpha}) \boldsymbol{f} \Vert^2 \\
    \text{s.t.} \quad \Gamma(\boldsymbol{\alpha}) \boldsymbol{f} = \Tilde{\boldsymbol{f}},
\end{gathered}
\end{equation}
where $\boldsymbol{f} \in \mathbb{R}^N$ is a vector of variables corresponding to the estimated loads, $\lambda \in \mathbb{R}^{+}$ is a Lagrange multiplier,  $\boldsymbol{\alpha} \in \mathbb{R}^{N}$ is a vector of scaling factors, and $\Gamma(\boldsymbol{\alpha})=\mathrm{diag}(\alpha_1,\ldots,\alpha_N)$ is a diagonal matrix with the elements of $\boldsymbol{\alpha}$ as its diagonal entries.  $\Gamma$ must be included so that $\boldsymbol{f}$ is scaled to $\Tilde{\boldsymbol{f}}$ (i.e. $\Vert \Gamma\boldsymbol{f} \Vert = \Vert \Tilde{\boldsymbol{f}} \Vert$).

With $\boldsymbol{f}$ introduced into the problem in Eq. (\ref{eq:complianceminlagrangian}), the objective function is now strictly quadratic with respect to $\boldsymbol{w}=\begin{bmatrix}
        \boldsymbol{u}^{\intercal} & -\boldsymbol{f}^{\intercal} \\
    \end{bmatrix}^{\intercal}$, because
\begin{equation} \label{eq:strainenergy_quad}
    \boldsymbol{u}^{\intercal}K(\boldsymbol{q})\boldsymbol{u} = 
    \begin{bmatrix}
        \boldsymbol{u}^{\intercal} & -\boldsymbol{f}^{\intercal} \\
    \end{bmatrix}
    \begin{bmatrix}
        K(\boldsymbol{q}) & \boldsymbol{0} \\
        \boldsymbol{0} & \boldsymbol{0} \\
    \end{bmatrix}
    \begin{bmatrix}
        \boldsymbol{u} \\
        -\boldsymbol{f} \\
    \end{bmatrix}
\end{equation}
and
\begin{equation} 
\label{eq:PDEresidual_quadratic}
\begin{split}
     \Vert K(\boldsymbol{q})\boldsymbol{u} - \Gamma(\boldsymbol{\alpha})\boldsymbol{f} \Vert^2
     =
    \begin{bmatrix}
        \boldsymbol{u}^{\intercal} & -\boldsymbol{f}^{\intercal} \\
    \end{bmatrix}
    \begin{bmatrix}
         K^{2}(\boldsymbol{q}) &  K(\boldsymbol{q}) \Gamma(\boldsymbol{\alpha})  \\
         \Gamma(\boldsymbol{\alpha})K(\boldsymbol{q}) &  \Gamma^2(\boldsymbol{\alpha}) \\
    \end{bmatrix}
    \begin{bmatrix}
        \boldsymbol{u} \\
        -\boldsymbol{f} \\
    \end{bmatrix}.
\end{split}
\end{equation}
Let
\begin{equation} \label{eq:L_matrix}
    A(\boldsymbol{q}) = 
    \begin{bmatrix}
        K(\boldsymbol{q}) & \boldsymbol{0} \\
        \boldsymbol{0} & \boldsymbol{0} \\
    \end{bmatrix}
    \in \mathbb{R}^{2N \times 2N}
\end{equation}
and
\begin{equation} \label{eq:M_matrix}
    B(\boldsymbol{q}, \boldsymbol{\alpha}) = 
    \begin{bmatrix}
         K^{2}(\boldsymbol{q}) & K(\boldsymbol{q}) \Gamma(\boldsymbol{\alpha})  \\
         \Gamma(\boldsymbol{\alpha}) K(\boldsymbol{q}) &  \Gamma^2(\boldsymbol{\alpha}) \\
    \end{bmatrix}
    \in \mathbb{R}^{2N \times 2N} .
\end{equation}
The minimization problem subsequently becomes
\begin{equation} \label{eq:lagrangianmin_quad}
\begin{gathered}
    \min_{\boldsymbol{q}, \boldsymbol{w}, \boldsymbol{\alpha}} \boldsymbol{w}^{\intercal} \left(A(\boldsymbol{q}) + \lambda B(\boldsymbol{q}, \boldsymbol{\alpha}) \right) \boldsymbol{w} \\
    \text{s.t.} \quad \Gamma(\boldsymbol{\alpha}) \boldsymbol{f} = \Tilde{\boldsymbol{f}},
\end{gathered}
\end{equation}
where the objective function resembles the expectation value of the Hamiltonian with respect to $\boldsymbol{w}$. Therefore, the problem Hamiltonian is defined as
\begin{equation} \label{eq:problemhamiltonian}
    \mathcal{H}(\boldsymbol{q}, \boldsymbol{\alpha}) = A(\boldsymbol{q}) + \lambda B(\boldsymbol{q}, \boldsymbol{\alpha}) =
    \begin{bmatrix}
        K + \lambda K^2 & \lambda K \Gamma \\
        \lambda \Gamma K & \lambda \Gamma^2 \\
    \end{bmatrix}
\end{equation}
so that the minimum expected objective value is associated with the optimal configuration and displacements. It is noted that $\mathcal{H}(\boldsymbol{q}, \boldsymbol{\alpha})$ is positive semi-definite and symmetric, since the objective values in Eq. (\ref{eq:complianceminlagrangian}) are non-negative, and $K(\boldsymbol{q})$ and $\Gamma(\boldsymbol{\alpha})$ are both symmetric.

The compliance minimization problem in Eq. (\ref{eq:lagrangianmin_quad}) can be further reformulated  as
\begin{equation} \label{eq:lagrangianmin_unconstrained}
    \min_{\boldsymbol{q}, \boldsymbol{w}, \boldsymbol{\alpha}} \boldsymbol{w}^{\intercal} \mathcal{H}(\boldsymbol{q}, \boldsymbol{\alpha}) \boldsymbol{w} + \mu \Vert \Gamma(\boldsymbol{\alpha}) \boldsymbol{f} - \Tilde{\boldsymbol{f}} \Vert^2,
\end{equation}
where $\mu \in \mathbb{R}^+$ is another Lagrange multiplier.
The first term in Eq. (\ref{eq:lagrangianmin_unconstrained}) is computed on quantum computer. The second term is computed on classical computer, since it does not have a quadratic form and a corresponding Hamiltonian is not available. 
    
\subsection{Controlled problem Hamiltonian operator} \label{sec:problemhamiloniandecomposition}

The core operator is the problem Hamiltonian operator $U_C$. $U_C$ is applied to the second register that encodes $\boldsymbol{w}$. Since the problem Hamiltonian depends on the material configuration $\boldsymbol{q}$, $U_C$ is controlled by the first register.  The control scheme is derived based on the decomposition of $K(\boldsymbol{q})$ and the Trotter-Suzuki approximation \cite{suzuki1976generalized}.
Let 
\begin{equation} \label{eq:totalstiffnessmatrix}
    K(\boldsymbol{q}) = K_{0} + \sum_{j=1}^m q_j K_{j},
\end{equation}
where $K_{0}$ is the stiffness matrix of the base structure without optional elements, and $K_{j}$'s are stiffness matrices associated with the $m$ optional elements ($j = 1, \dots, m$). $q_j$ is a binary variable which indicates the presence or absence of the $j\textsuperscript{th}$ optional element.

For simplicity of notation, let
\begin{equation} \label{eq:hamil_noqubit}
\begin{split}
    Q^{(0)}(\boldsymbol{\alpha}) = 
    \begin{bmatrix}
        K_0 + \lambda K_0^2 & \lambda K_0 \Gamma(\boldsymbol{\alpha}) \\
        \lambda \Gamma(\boldsymbol{\alpha}) K_0 & \lambda \Gamma^2(\boldsymbol{\alpha}) \\
    \end{bmatrix}
\end{split},
\end{equation}
\begin{equation} \label{eq:hamil_onequbit}
\begin{split}
    & Q^{(1)}(q_j, \boldsymbol{\alpha}) = 
    q_j 
    \begin{bmatrix}
        K_j + \lambda K_0 K_j + \lambda K_j K_0 + \lambda K_j^2 & \lambda K_j \Gamma(\boldsymbol{\alpha}) \\
         \lambda \Gamma(\boldsymbol{\alpha}) K_j & \boldsymbol{0} \\
    \end{bmatrix},
\end{split}
\end{equation}
and
\begin{equation} \label{eq:hamil_twoqubit}
\begin{split}
    Q^{(2)}(q_j, q_k) = q_j q_k
    \begin{bmatrix}
        \lambda (K_j K_k + K_k K_j) & \boldsymbol{0} \\
        \boldsymbol{0} & \boldsymbol{0} \\
    \end{bmatrix}. \
\end{split}
\end{equation}
With the Trotter-Suzuki approximation, $U_C$ becomes 
\begin{equation} \label{eq:hamil_trotterized}
\begin{split}
    & U_C(\gamma, \boldsymbol{q}, \boldsymbol{\alpha}) \approx
    e^{-i\gamma Q^{(0)}(\cdot)}  \left[\prod_{j=1}^m e^{- i\gamma Q^{(1)}(\cdot)} \right]  \left[\prod_{j=1}^{m-1} \prod_{k=j+1}^m e^{- i\gamma Q^{(2)}(\cdot)} \right],
\end{split}
\end{equation}
where $\gamma$ is a rotation parameter.

The problem Hamiltonian operator in Eq. (\ref{eq:hamil_trotterized}) is a product of three terms. The first term, $\mathrm{exp}\left(-i\gamma Q^{(0)}(\boldsymbol{\alpha}) \right)$, is independent of $\boldsymbol{q}$. The second term, $\prod_{j=1}^m \mathrm{exp}\left({- i\gamma Q^{(1)}(q_j, \boldsymbol{\alpha})} \right)$, is a product of unitary operators, each of which is associated with one optional element $q_j$.  The third term, $\prod_{j=1}^{m-1} \prod_{k=j+1}^m \mathrm{exp}\left({- i\gamma Q^{(2)}(q_j, q_k)} \right)$, is also a product of unitary operators, and each one is associated with two optional elements $q_j$ and $q_k$. $U_C(\gamma, \boldsymbol{q}, \boldsymbol{\alpha})$ is a combination of operators corresponding to the present elements.

The application of $U_C(\gamma, \boldsymbol{q}, \boldsymbol{\alpha})$ on the second register is controlled by the first register.  The quantum state of the second register that encodes the displacements is entangled with the state of the first register which indicates the topology. As a result, the tasks of minimizing the compliance and solving the PDE constraints can be performed simultaneously.

\subsection{Gate complexity of problem Hamiltonian operator} \label{subsec:gatecomplexity}

The control problem Hamiltonian operator consists of one $Q^{(0)}$, $m$ of $Q^{(1)}$, and ${m(m-1)}/{2}$ of $Q^{(2)}$. Each Hermitian operator $Q^{(\cdot)}$ can be decomposed as
\begin{equation} \label{eq:paulidecomp}
    Q^{(\cdot)} = \sum_{j=1}^D E_j,
\end{equation} 
where $E_j$ is a one-sparse Hermitian matrix and $D$ is the number of $E_j$'s. The value of $D$ depends on the sparsity of $Q^{(\cdot)}$. The following theorem establishes an upper bound for $D$.
\begin{theorem} [\cite{kirby2021variational}]
    Suppose $Q^{(\cdot)}$ is a $d$-sparse Hermitian matrix. Then
    \begin{equation} \label{eq:Dupperbound}
        D \leq 4d^2 \left[\log_2 \left({{2}^{1/2}\Vert Q^{(\cdot)} \Vert_{\mathrm{max}}} \right)\right]
    \end{equation}
    where $\Vert Q^{(\cdot)} \Vert_{\mathrm{max}}$ is the maximum absolute value of any entry in $Q^{(\cdot)}$.
\end{theorem}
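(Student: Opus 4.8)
The plan is to bound $D$ as a product of two separate counts: a combinatorial decomposition of $Q^{(\cdot)}$ into $O(d^2)$ one-sparse Hermitian summands, followed by a binary-precision refinement of each such summand into $O\!\left(\log_2(2^{1/2}\|Q^{(\cdot)}\|_{\max})\right)$ further one-sparse pieces. Multiplying the two bounds produces the claimed estimate $D \le 4d^2\bigl[\log_2(2^{1/2}\|Q^{(\cdot)}\|_{\max})\bigr]$.

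\emph{Stage 1: combinatorial decomposition into $O(d^2)$ one-sparse Hermitian matrices.} First I would encode the support of $Q^{(\cdot)}$ as an undirected graph $G$ on the index set, with $i$ adjacent to $j$ exactly when $Q^{(\cdot)}_{ij}\neq 0$; Hermiticity makes this relation symmetric, and $d$-sparsity bounds the maximum degree of $G$ by $d$. Fixing for each vertex an ordering of its incident edges, I would color the oriented edge $i\to j$ by the pair $(\mathrm{rank}_i(j),\mathrm{rank}_j(i))\in\{1,\dots,d\}^2$, a rule that needs only $O(d)$ local queries to $Q^{(\cdot)}$. The entries belonging to a fixed oriented color form a one-sparse matrix, since no vertex can be incident to two edges carrying the same oriented color; adding to it the entries of the transposed color $(b,a)$ restores Hermiticity without destroying one-sparsity. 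Counting the $d^2$ oriented colors, the treatment of the real diagonal (which is already a one-sparse Hermitian matrix on its own), and the splitting of each complex one-sparse piece into one with purely real and one with purely imaginary entries (both Hermitian), a routine tally bounds the number of one-sparse Hermitian summands by $4d^2$.

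\emph{Stage 2: binary refinement.} Given one one-sparse Hermitian matrix $P$ with $\|P\|_{\max}\le M:=\|Q^{(\cdot)}\|_{\max}$, I would expand the real and imaginary parts of its conjugate-paired entries in signed binary. Collecting, over all entries, the contributions at a fixed bit position $k$ yields a one-sparse Hermitian matrix all of whose nonzero entries have the single magnitude $2^{k}$ --- precisely the canonical form exploited by the gate construction that follows. Because every entry has modulus at most $M$, its real and imaginary components are bounded by $M$ as well, and after the conversion factor of $2^{1/2}$ relating the modulus to the larger component, only $\bigl[\log_2(2^{1/2}M)\bigr]$ bit positions are needed. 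Hence each of the $\le 4d^2$ summands of Stage 1 splits into at most $\bigl[\log_2(2^{1/2}M)\bigr]$ one-sparse Hermitian pieces, and multiplying the two counts gives the theorem.

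\emph{Expected main obstacle.} The crux is Stage 1: one must simultaneously verify that the purely local coloring rule yields classes that are genuinely one-sparse, that pairing a class with its transpose preserves one-sparsity while restoring the Hermitian structure, and that the bookkeeping over oriented-versus-unoriented pairs, the diagonal, and the real/imaginary split collapses into the constant factor $4$ rather than something larger. By comparison, Stage 2 --- the signed-binary expansion and the extraction of the $2^{1/2}$ factor --- is essentially routine once the entries are taken to be (rescaled) integers.
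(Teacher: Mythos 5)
First, a point of reference: the paper offers no proof of this statement at all --- the theorem is imported verbatim from \cite{kirby2021variational} --- so the only meaningful comparison is with the argument in that reference. Your two-stage plan (edge-color the support graph of the $d$-sparse Hermitian matrix to get $O(d^2)$ one-sparse Hermitian pieces, then slice each piece bitwise to pick up the $\log_2\bigl(2^{1/2}\Vert Q^{(\cdot)}\Vert_{\mathrm{max}}\bigr)$ factor) is exactly the strategy used there and in the Hamiltonian-simulation literature it builds on, so the architecture is right.

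However, Stage 1 contains a genuine gap, and it sits precisely where you flagged the ``crux.'' A single oriented color class $(a,b)$ is indeed one-sparse, but its union with the transposed class $(b,a)$ is not: when $a\neq b$, a vertex $i$ can have one neighbor $j$ with $(\mathrm{rank}_i(j),\mathrm{rank}_j(i))=(a,b)$ and a \emph{different} neighbor $j'$ with $(\mathrm{rank}_i(j'),\mathrm{rank}_{j'}(i))=(b,a)$, so row $i$ of the symmetrized matrix acquires two off-diagonal entries. Concretely, take the maximum-degree-$2$ graph with edges $\{1,2\},\{1,3\},\{2,4\}$ and local orderings in which vertex $1$ ranks $2$ before $3$, vertex $2$ ranks $4$ before $1$, and vertex $3$ ranks $1$ first; then $1\to 2$ has color $(1,2)$ while $1\to 3$ has color $(2,1)$, and the combined class $\{(1,2),(2,1)\}$ has two nonzero entries in row $1$. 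This failure mode is exactly why the classical decomposition of Berry, Ahokas, Cleve and Sanders requires $6d^2$ colors, and why later improvements to $d^2$ need a more delicate coloring that exploits a global vertex ordering; it cannot be repaired by purely local bookkeeping. Relatedly, your tally places the factor $4$ in Stage 1, whereas in the cited source the coloring stage yields $d^2$ Hermitian one-sparse terms and the factor $4$ arises in the bit-slicing stage (a factor of $2$ for real versus imaginary parts and another factor of $2$ for completing each bit-slice to a self-inverse matrix). Finally, your Stage 2 tacitly assumes integer or precision-rescaled entries so that finitely many bit positions suffice; that matches the statement as quoted here, but the original lemma carries an explicit precision parameter inside the logarithm, which is worth restoring if you want the argument to close.
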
 

The sparsity depends on each type of $Q^{(\cdot)}$. Let $v$ be the maximum number of elements connected to a single node in a truss structure. For $Q^{(0)}$, the largest value of $d$ is $4v$ since the base structure can contain elements which connect between one node and all other nodes. For $Q^{(1)}$, the largest value of $d$ is $2v + 4$ since $K_0$ only appears in the top-left submatrix of $Q^{(1)}$ and each $K_j$ has at most four non-zero entries in a row or column. For $Q^{(2)}$, the largest value of $d$ is $6$ since the two optional elements associated with $Q^{(2)}$ can share the same node.

With the Trotter-Suzuki approximation, the operator $e^{-i \gamma Q}$ is approximated as
\begin{equation}
    e^{-i \gamma Q} \approx \prod_{j = 1}^D e^{-i \gamma E_j}.
\end{equation} 
If $n$ is the number of qubits nneded to implement $e^{-i \gamma Q}$, then the number of single-qubit rotation and CNOT gates scales polynomially with $n$ \cite{wiebe2011simulating}. 

$U_C$ consists of $\mathcal{O}\left(v^2 \log_2 \left({2}^{1/2} \Vert Q^{(0)} \Vert_{\mathrm{max}} \right) \mathrm{poly}(n)\right)$ gates for $Q^{(0)}$, $\mathcal{O}\left(v^2 \log_2 \left({2}^{1/2} \Vert Q^{(1)} \Vert_{\mathrm{max}} \right)\mathrm{poly}(n)\right)$  gates for each $Q^{(1)}$, and $\mathcal{O}\left(\log_2 \left({2}^{1/2} \Vert Q^{(2)} \Vert_{\mathrm{max}} \right)\mathrm{poly}(n)\right)$ gates for each $Q^{(2)}$. The gate complexity of $U_C$ depends on the arrangement of elements in the structure.  As the number of elements connected to a single node decreases, the sparsity of the stiffness matrices increases. As a result, fewer quantum gates are needed to construct the control scheme. 

\subsection{Quantum circuit architecture} \label{sec:quantumcircuit}

The overall quantum circuit for the proposed VQA is shown in Figure \ref{fig:toqaboa_circuit}. In the first register $|\boldsymbol{q}\rangle = \otimes_{j=1}^{m} |q_j\rangle$, the computational basis encodes a total of $2^m$ material configurations. In the second register $|\boldsymbol{w}\rangle = \otimes_{j=1}^{n} |w_j\rangle$, the amplitudes of $n = \lceil \log_2 N \rceil +1$ qubits represent the estimated displacements $\boldsymbol{u}$ and loads $\boldsymbol{f}$ collectively. All qubits are initialized as $|0\rangle$.

\begin{figure*}[htbp!]
    \centering
    \includegraphics[width=\textwidth, trim={0in 1in 0 0.5in}]{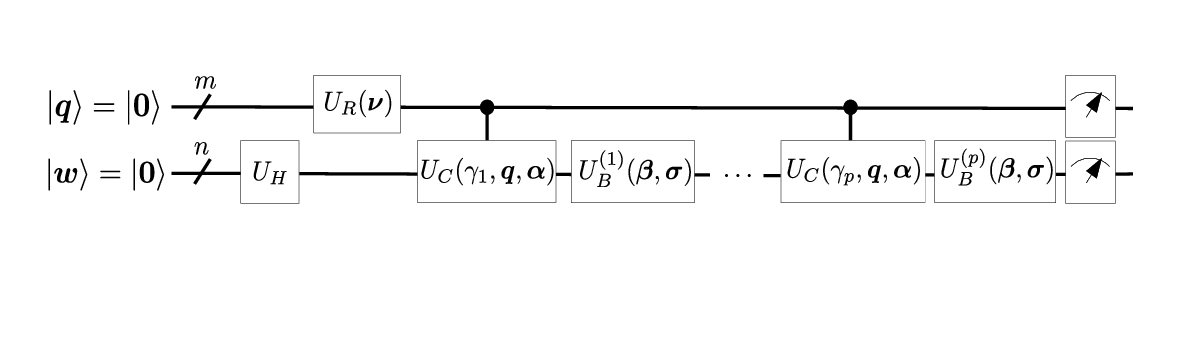}
    \caption{The architecture of the proposed variational quantum topology optimization algorithm}
    \label{fig:toqaboa_circuit}
    \vspace{5pt}
\end{figure*}

The overall operator is defined as
\begin{equation}
    W=\left[(I\otimes U_B)W_C\right]^p (U_R \otimes U_H),
\end{equation}
where $p$ is the number of repetitions. $U_H$ is the Walsh-Hadamard transform, defined as
\begin{equation}
    U_H=H^{\otimes n}.
\end{equation}
$U_R$ is defined as
\begin{equation} \label{eq:UR}
    U_R(\boldsymbol{\nu}) = \otimes_{j=1}^m e^{-i \nu_j \sigma_X},
\end{equation}
where $\boldsymbol{\nu} = \begin{bmatrix} \nu_1 & \dots & \nu_m\\ \end{bmatrix}^\intercal$, $\nu_j$'s are continuous parameters, and $\sigma_X$ is the Pauli-X matrix. $W_C$ is defined as
\begin{equation}
    W_C(\gamma, \boldsymbol{q}, \boldsymbol{\alpha}) = \sum_{\boldsymbol{q} \in \{0, 1\}^{m}} |\boldsymbol{q}\rangle \langle \boldsymbol{q}| \otimes U_C(\gamma, \boldsymbol{q}, \boldsymbol{\alpha}).
\end{equation}
$U_B$ is defined as
\begin{equation}
\begin{split}
    U_B(\boldsymbol{\beta}, \boldsymbol{\sigma}) = \left(\otimes_{k=1}^n e^{-i\beta_{k}\sigma_X}\right) \left(\otimes_{k=1}^n |0\rangle\langle 0|I + |1\rangle\langle 1| e^{-i\beta_{n+k}\sigma_k} \right) \left(\otimes_{k=1}^n e^{-i\beta_{2n+k}\sigma_Y}\right),
\end{split}
\end{equation}
where $\boldsymbol{\beta} = \begin{bmatrix} \beta_1 & \dots & \beta_{3n}\\ \end{bmatrix}^\intercal$, $\beta_k$'s are continuous circuit  parameters,  $\boldsymbol{\sigma} = \begin{bmatrix} \sigma_1 & \dots & \sigma_n\\ \end{bmatrix}^\intercal$, and $\sigma_k \in \{I, \sigma_X, \sigma_Y, \sigma_Z\}$. Here, both continuous circuit parameters and the circuit architecture as discrete choices of $\sigma_k$'s are optimized. 

The application of $U_R$ on $|\boldsymbol{q}\rangle$ results in the exploration of all material configurations. For constrained TO problems, a method of encoding the volume and connectivity constraints is needed to restrict the search to feasible configurations. 

\section{Constraint Encoding} \label{sec:constraintencoding}
To solve constrained TO problems, a sequence of controlled and uncontrolled rotation gates $\Tilde{U}_R$ are applied on $|\boldsymbol{q}\rangle$ instead of $U_R$ described in the previous section. This sequence is defined as
\begin{equation} \label{eq:constrainedrotation}
    \Tilde{U}_R(\xi, \boldsymbol{\nu}) = U_{T}(\xi)U_{R}(\boldsymbol{\nu}),
\end{equation}
where $\xi \in \mathbb{R}^-$ is a rotation parameter and $U_{T}$ is a sequence of controlled rotation gates. Each rotation gate is applied to a qubit about the x-axis in the Bloch sphere. 
By applying the rotation gates and restricting the ranges of parameters, the search space is reduced to feasible configurations which satisfy the constraints.

Two types of constraints are encoded. The first type is a connectivity constraint, which ensures that elements in the structure are connected with each other. A configuration which satisfies connectivity constraints does not contain the checkerboard patterns or elements which are located in the centers of voids. The connectivity constraints are encoded with the controlled rotation gates. For all controlled rotation gates, $\xi$ is set to $-\nu^*$, where $\nu^*$ is the upper bound of $\nu_j$. Setting $\xi$ to this value is needed to reduce the probabilities of measuring infeasible configurations. To illustrate with an example, let $q_j$ and $q_k$ denote the presence of two adjacent optional elements. When $q_k = 1$, it is required that $q_j = 1$ so that the optional elements are connected to the base structure. A controlled rotation gate is applied where $|q_j\rangle$ is the control qubit and $|q_k\rangle$ is the target qubit. When $|q_j\rangle = |0\rangle$, the gate decreases the amplitude of $|1\rangle$ in $|q_k\rangle$.  As a result, the probability of measuring configurations where $q_j = 0$ and $q_k = 1$ decreases. Figure \ref{fig:MBB_QC} shows an example of $\Tilde{U}_R$, which encodes the constraint that $q_4 = 0$ when both $q_2 = 0$ and $q_3 = 0$. In this case, the rotation gate applied on $|q_4\rangle$ is controlled by both $|q_2\rangle$ and $|q_3\rangle$. The rotation is only performed when both $|q_2\rangle = |0\rangle$ and $|q_3\rangle = |0\rangle$.

\begin{figure}[htbp!]
    \centering
    \includegraphics[width=0.4\textwidth, trim={0.25in 0.5in 0.25in 0.3in}, clip]{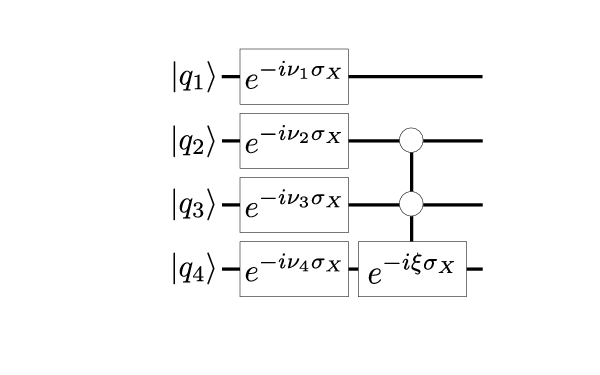}
    \caption{An example of connectivity constraints encoded in $\Tilde{U}_R$}
    \label{fig:MBB_QC}
\end{figure}

The second type is a volume constraint, which limits the maximum number of included elements. The volume constraints are encoded by setting $\nu^{*}$.  By enforcing this constraint, some of the qubits are likely to be measured as $|1\rangle$, whereas the other qubits are likely to be measured as $|0\rangle$. In general, as $\nu^{*}$ increases, the probability of including more elements in the optimal configuration increases.  The following theorem establishes the guidance of choosing the maximum value $\nu^{*}$.
\begin{theorem} \label{thm:parametervolumeconstraint}
    Let $m \in \mathbb{Z}^+$ denote the total number of optional elements and $s \in \mathbb{Z}^+$ be the maximum number of allowable ones. The probability that the number of optional elements $q^*$ in the obtained optimal structure is more than $s$ has an upper bound, as 
    \begin{equation} \label{eq:maxnubound}
        \mathbb{P}(q^*>s) \leq \sum_{l = s+1}^m {m \choose l} \sin^{2l} \left(\frac{\nu^{*}}{2} \right)\cos^{2(m-l)} \left(\frac{\nu^{*}}{2} \right).
    \end{equation}
\end{theorem}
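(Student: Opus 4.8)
The plan is to reduce the claim to a comparison between a Poisson--binomial distribution and an ordinary binomial distribution via stochastic dominance. First I would determine the state of the first register: applying $U_R(\boldsymbol{\nu})$ to $|0\rangle^{\otimes m}$ produces the product state $\bigotimes_{j=1}^{m}\bigl(\cos(\nu_j/2)\,|0\rangle - i\sin(\nu_j/2)\,|1\rangle\bigr)$, so a measurement at this point reads qubit $j$ as $1$ independently with probability $p_j=\sin^{2}(\nu_j/2)$. I would then argue that the remainder of the circuit cannot make larger counts of $|1\rangle$ on the first register more likely: $W_C$ is block diagonal in the computational basis of the first register and so leaves its marginal distribution unchanged, $(I\otimes U_B)$ acts only on the second register, and the controlled rotation gates of $U_T(\xi)$ with $\xi=-\nu^{*}$ are constructed so as to only reduce the amplitude of $|1\rangle$ on their targets. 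Hence $\mathbb{P}(q^{*}>s)$ is at most the tail probability of the $|1\rangle$-count in the product state produced by $U_R$ alone.

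Next, let $X_1,\dots,X_m$ be independent with $X_j\sim\mathrm{Bernoulli}(p_j)$ and put $Q=\sum_{j=1}^{m}X_j$, so $\mathbb{P}(q^{*}>s)\le\mathbb{P}(Q>s)$. Under the natural assumption $0\le\nu_j\le\nu^{*}\le\pi$, the map $\theta\mapsto\sin^{2}(\theta/2)$ is nondecreasing on $[0,\pi]$ (its derivative is $\tfrac{1}{2}\sin\theta\ge0$), so $p_j\le p^{*}:=\sin^{2}(\nu^{*}/2)$ for every $j$. Taking i.i.d.\ uniform variables $U_j$ on $[0,1]$ and setting $X_j=\mathbf{1}[U_j\le p_j]$ and $X_j^{*}=\mathbf{1}[U_j\le p^{*}]$ couples each $X_j$ with $X_j^{*}\sim\mathrm{Bernoulli}(p^{*})$ so that $X_j\le X_j^{*}$ surely; hence $Q\le Q^{*}:=\sum_{j=1}^{m}X_j^{*}$ surely, where $Q^{*}\sim\mathrm{Binomial}(m,p^{*})$. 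Therefore $\mathbb{P}(Q>s)\le\mathbb{P}(Q^{*}>s)=\sum_{l=s+1}^{m}\binom{m}{l}(p^{*})^{l}(1-p^{*})^{m-l}$, and substituting $p^{*}=\sin^{2}(\nu^{*}/2)$ together with $1-p^{*}=\cos^{2}(\nu^{*}/2)$ gives the stated bound.

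The single-qubit amplitude computation and the binomial tail identity are routine; the step requiring the most care is the reduction in the first paragraph. One cannot argue term by term, since the inequality $\mathbb{P}(Q=l)\le\binom{m}{l}(p^{*})^{l}(1-p^{*})^{m-l}$ is false in general (for instance at $l=0$ when all $p_j=0$), so the estimate must be made at the level of tails, i.e.\ in the stochastic order, and likewise one must check that $U_T$ preserves the stochastic order of the $|1\rangle$-count on the first register rather than merely the individual marginals. If instead the theorem is read as a statement about the state immediately after $U_R$ --- the pure volume-constraint case, in which $U_T$ is absent --- this obstacle disappears and only the stochastic-dominance argument is needed.
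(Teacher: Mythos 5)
Your proof is correct, and it takes a genuinely different---and in fact more robust---route than the paper's. The paper also starts from the product state after $U_R$ and the Poisson--binomial expression for $\mathbb{P}(q^*=l)$, but it then asserts the \emph{pointwise} bound $\mathbb{P}(q^*=l)\le\binom{m}{l}\sin^{2l}(\nu^*/2)\cos^{2(m-l)}(\nu^*/2)$ for each $l$ separately, justified only by the monotonicity of $\sin(\nu_j/2)$ on $[0,\pi]$, and then sums over $l$. As you point out, that pointwise bound does not follow from monotonicity alone---raising each $\nu_j$ to $\nu^*$ increases the $\sin^2$ factors but \emph{decreases} the $\cos^2$ factors---and it is false in general: take $m=3$, $l=2$, $\nu_1=\nu_2=\nu^*$ with $\sin^2(\nu^*/2)=0.9$, and $\nu_3=0$; then the left-hand side equals $0.81$ while the claimed upper bound is $3\times 0.81\times 0.1=0.243$. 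Your coupling/stochastic-dominance argument establishes the inequality where it is actually needed, at the level of tails ($\mathbb{P}(Q>s)\le\mathbb{P}(Q^*>s)$ with $Q^*\sim\mathrm{Binomial}(m,\sin^2(\nu^*/2))$), so it proves the stated theorem while repairing, rather than reproducing, the flawed intermediate step. Your opening paragraph also addresses something the paper silently skips: the theorem refers to the ``obtained optimal structure,'' whereas the paper's proof analyzes only the state immediately after $U_R$; your observations that $W_C$ is block diagonal in the first register and that $I\otimes U_B$ acts only on the second register are correct, and your caveat that $U_T$ must be shown to preserve the stochastic order of the $|1\rangle$-count (not just the single-qubit marginals) is exactly the remaining informal step---one that cannot be made fully rigorous from the paper's description of $U_T$, and which the paper's own proof does not attempt either.
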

\begin{proof}
    The initial quantum state is $|\boldsymbol{0}\rangle^{\otimes m}$. Applying $U_R$ results in the transformed state
    \begin{equation}
        |\psi\rangle = \sum_{\boldsymbol{q} \in \{0, 1\}^m} \left[\prod_{j=1}^m a \left(q_j, {\nu}_j \right) \right] |\boldsymbol{q}\rangle,
    \end{equation}
    where
    \begin{equation}
        a(q_j, {\nu}_j) = \begin{cases}
            -i \sin \left( \frac{{\nu}_j}{2} \right) & \text{if $q_j = 1$} \\
            \cos \left( \frac{{\nu}_j}{2} \right) & \text{if $q_j = 0$} \\
        \end{cases}.
    \end{equation}
    The probability of obtaining any configuration with $l$ optional elements is then 
    \begin{equation} \label{eq:prob_joptelements}
        \mathbb{P}(q^*=l)  = \sum_{M_l} \left[ \prod_{j \in M_l} \sin^2 \left(\frac{{\nu}_j}{2} \right) \right]\left[ \prod_{k \notin M_l} \cos^2 \left(\frac{{\nu}_k}{2} \right) \right]  \quad \left(l=s+1,\ldots,m\right),
    \end{equation}
    where $M_l \subseteq \{1, \dots, m\}$ such that $|M_l| = l$. The right-hand side of Eq. (\ref{eq:prob_joptelements}) is a sum over all possible subsets with $l$ integers.
    
    As ${\nu}_j$ increases from 0 to $\pi$, $\sin({\nu}_j/2)$ increases monotonically.  Since $\nu^{*}$ is the upper bound for all ${\nu}_j$'s,
    \begin{equation}
        \sum_{M_l} \left[ \prod_{j \in M_l} \sin^2 \left(\frac{{\nu}_j}{2} \right) \right]\left[ \prod_{k \notin M_l} \cos^2 \left(\frac{{\nu}_k}{2} \right) \right] \leq {m \choose l} \sin^{2l} \left(\frac{\nu^{*}}{2} \right)\cos^{2(m-l)} \left(\frac{\nu^{*}}{2} \right),
    \end{equation}
    where
$       {m \choose l} = \frac{m!}
        {l!(m-l)!}
$.
    Therefore,
    \begin{equation}
       \mathbb{P}(q^*=l) \leq {m \choose l} \sin^{2l} \left(\frac{\nu^{*}}{2} \right)\cos^{2(m-l)} \left(\frac{\nu^{*}}{2} \right)
    \end{equation}
and
    \begin{equation}
         \mathbb{P}(q^*>s) = \sum_{l = s+1}^m \mathbb{P}(q^*=l) \leq \sum_{l = s+1}^m {m \choose l} \sin^{2l} \left(\frac{\nu^{*}}{2} \right)\cos^{2(m-l)} \left(\frac{\nu^{*}}{2} \right).
    \end{equation}
\end{proof}

Theorem \ref{thm:parametervolumeconstraint} provides a guidance to set the value of $\nu^{*}$. Given a probability upper bound of constraint violation, the value of $\nu^{*}$ can be obtained by solving the nonlinear equation in Eq. (\ref{eq:maxnubound}). Some examples are illustrated in Figure \ref{fig:probupperboundcurves}, where each curve represents the solutions of the nonlinear equation. The value of $\nu^*$ can be identified for the given values of $m$ and $s$. For a fixed value of $m$, the upper probability decreases as $s$ increases. It is also observed that the curve corresponding to $s=m$ is constantly zero since there is no volume constraint involved.

\begin{figure*}[ht!] 
\centering
\includegraphics[width=0.5\linewidth, trim=0 0 0 0, clip]{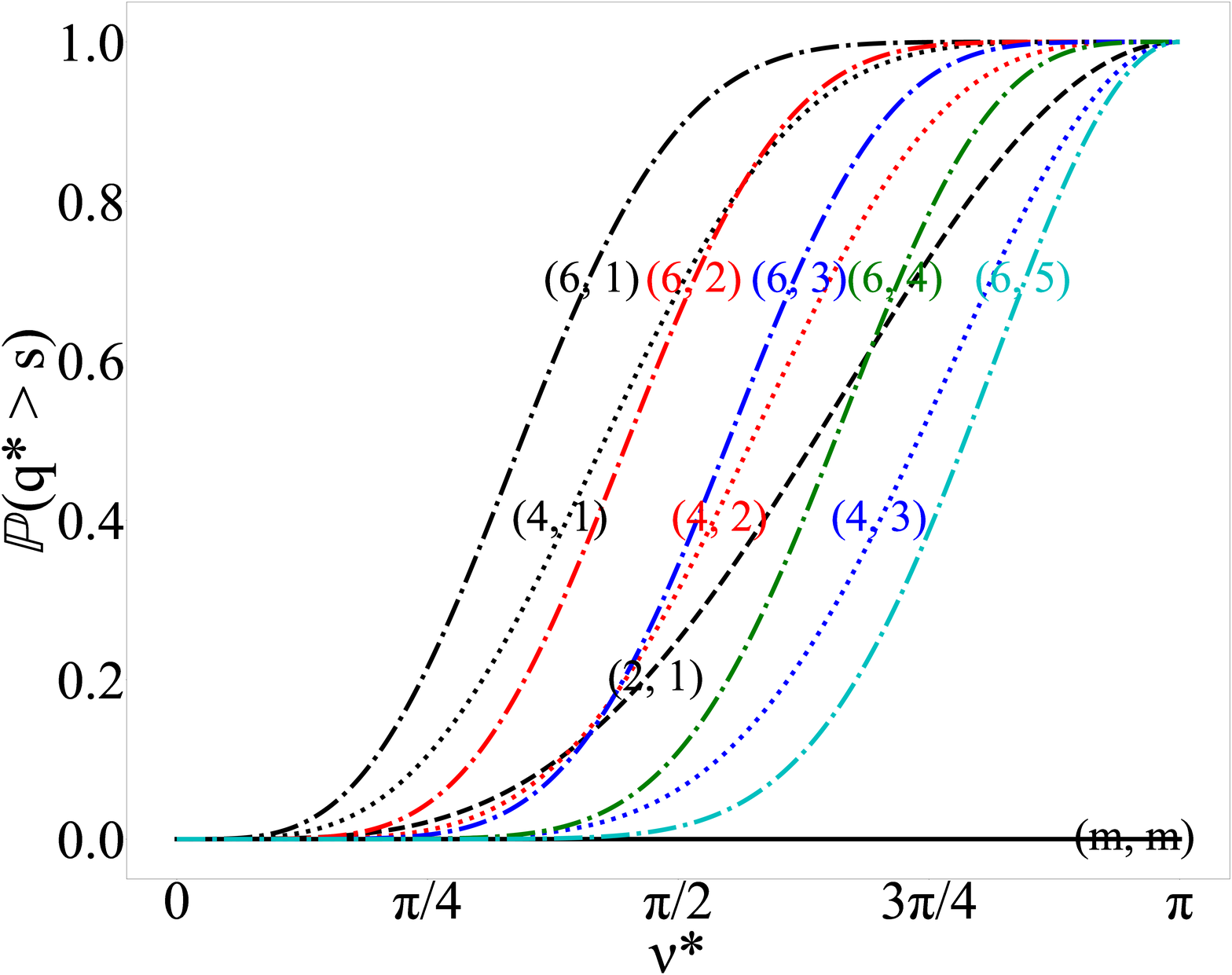}
\caption{The upper bounds of constraint violation probability for different values of ($m$, $s$). The dashed line corresponds to $m=2$, the dotted lines correspond to $m=4$, the dash-dotted lines correspond to $m=6$, and the solid line corresponds to $s=m$.}
\label{fig:probupperboundcurves}
\end{figure*}

\section{Experimental Results} \label{sec:results}

\subsection{Compliance minimization of truss structures} \label{subsec:trussexample}

The proposed VQA for TO is first demonstrated with three two-dimensional truss structures. The three structures are shown in Figure \ref{fig:trussstructures}, where the $x$ and $y$ axes represent spatial coordinates in inches. In Figure \ref{subfig:trussstructure1}, truss structure \#1 contains three pin supports and a vertical load of 8,000 kilopounds (kips) is applied at node 2. In Figure \ref{subfig:trussstructure2}, truss structure \#2 contains two pin supports and a vertical load of 8,000 kips is applied at node 2. In Figure \ref{subfig:trussstructure3}, truss structure \#3 contains two pin supports and a vertical load of 5,500 kips is applied at node 5. Each truss consists of a base structure and optional elements. The base structures are illustrated as solid lines, whereas the optional elements are depicted as dashed lines. In all truss structures, the cross-sectional area and elastic modulus of each element are 8.5 in\textsuperscript{2} and 29,000 ksi, respectively.

\begin{figure*}[ht!]
\centering
\begin{subfigure}{.32\textwidth}
  \centering
  \includegraphics[width=\linewidth, trim=0 0.6in 0 0, clip]{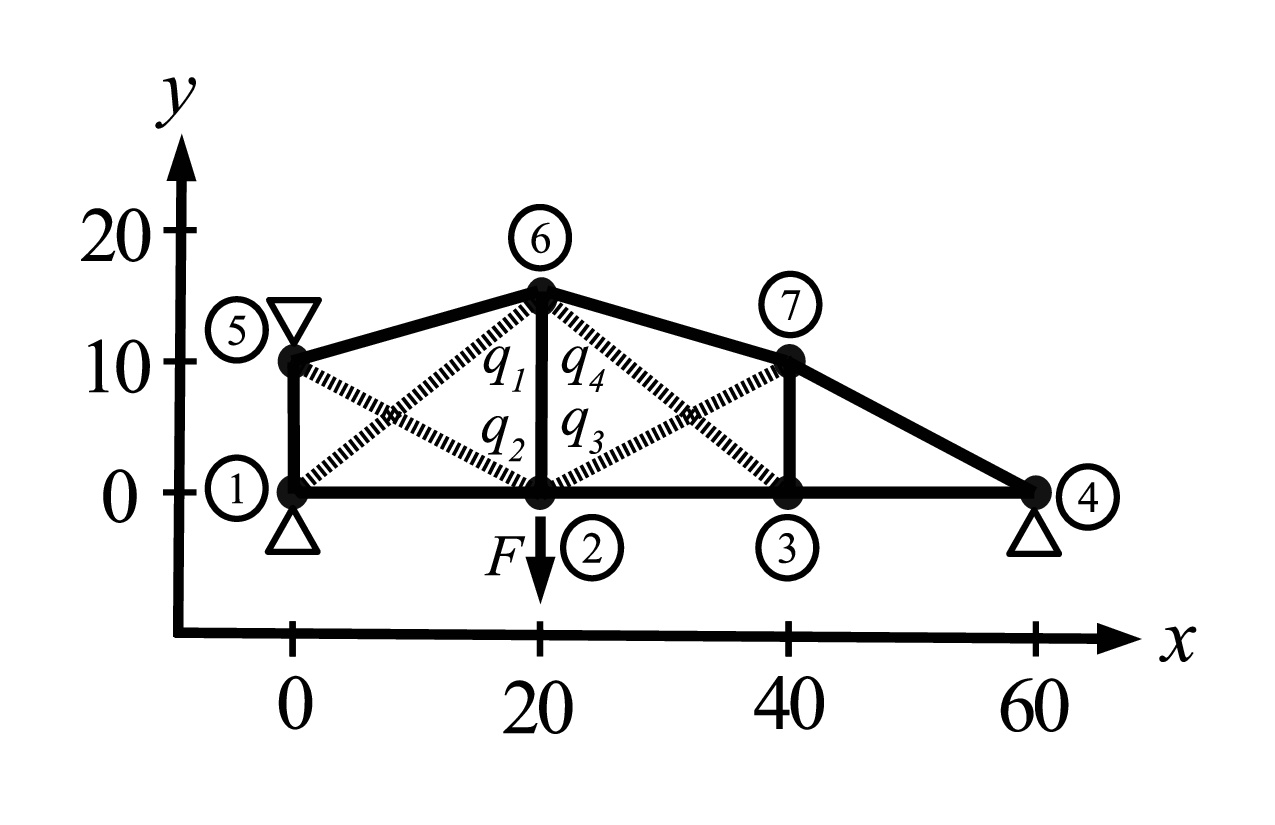}
  \caption{}
  \label{subfig:trussstructure1}
\end{subfigure}%
\begin{subfigure}{.32\textwidth}
  \centering
  \includegraphics[width=\linewidth, trim=0 0.6in 0 0, clip]{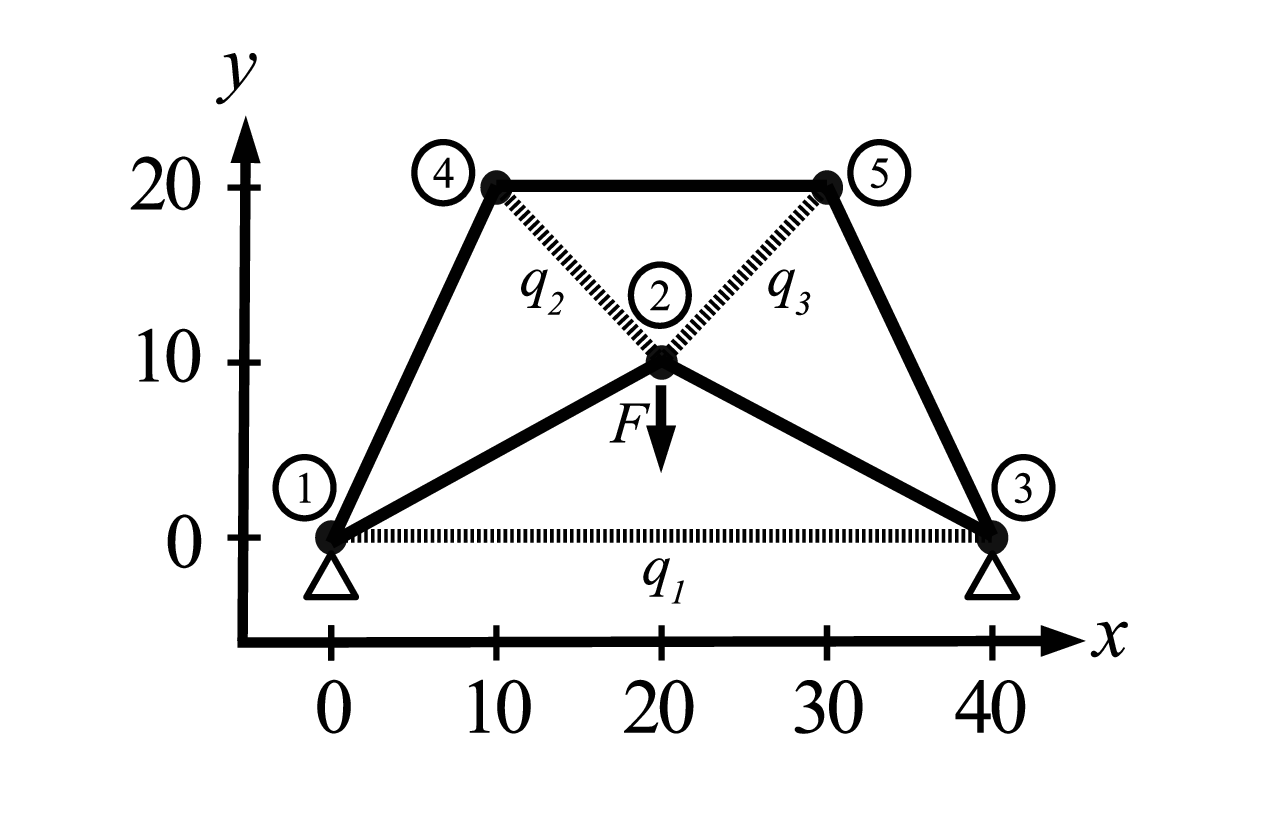}
  \caption{}
  \label{subfig:trussstructure2}
\end{subfigure}
\begin{subfigure}{.32\textwidth}
  \centering
  \includegraphics[width=\linewidth,trim=0 0.6in 0 0, clip]{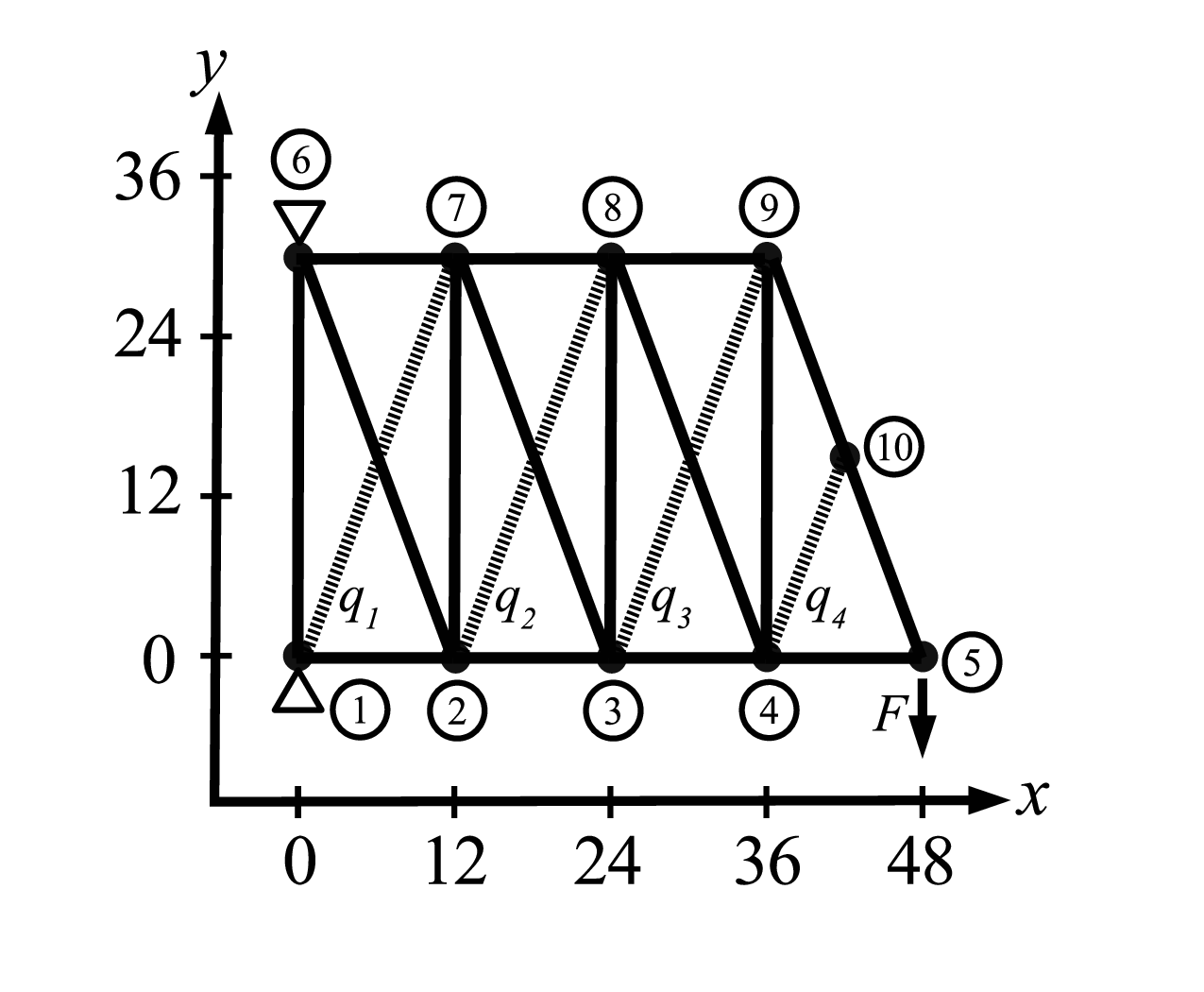}
  \caption{}
  \label{subfig:trussstructure3}
\end{subfigure}
\caption{Illustrations of truss structures (a) \#1, (b) \#2, and (c) \#3}
\label{fig:trussstructures}
\end{figure*}

Bayesian optimization is performed to find the parameters and architecture of quantum circuit simultaneously. The surrogate model, which is a Gaussian process regression model with the Mat\'ern covariance kernel, generates a probabilistic distribution of objective functions. The upper confidence bound is used as the acquisition function to search for the optimal parameters. 

For each truss structure, the proposed VQA is performed for ten runs to obtain the distributions of the optimal configurations. For truss structures \#1 and \#2,  50 iterations of Bayesian optimization are done for each run.  $\lambda$ is initially 0.1 and increases by 0.005 after each iteration.  For truss structure \#3, 100 iterations are done for each run.  $\lambda$ is initially 1.0 and increases by 0.05 after each iteration.  The Lagrange multiplier $\mu$ is set to a fixed value of 15 for all three structures. The circuit depth is set to $p = 4$. 

The maximum numbers of elements allowed to be included are two for truss structures \#1 and \#3, whereas the maximum number of elements is one for truss structure \#2. To assess the sensitivity of setting the probabilities for violating volume constraints and verify the theoretical upper bound in Theorem \ref{thm:parametervolumeconstraint}, the probabilities of violating constraints are also estimated for each structure. Sampling procedures are taken, where eleven values of $\nu^*$, which are multiples of 0.1$\pi$ ranging from 0 to $\pi$, are selected. 100,000 shots are taken for each value on the quantum circuit in Figure \ref{fig:toqaboa_circuit} without optimizing the parameters. By counting the number of times $N_f$ that a configuration with more than $s$ optional elements are generated out of the 100,000 shots, the constraint violation probability is estimated as $\mathbb{P}(q^* > s) \approx {N_f}/{100000}$.

\begin{figure*}[ht!] 
\centering
\begin{subfigure}{.45\textwidth}
  \centering
  \includegraphics[width=\linewidth, trim=0 0 -1.25in 0, clip]{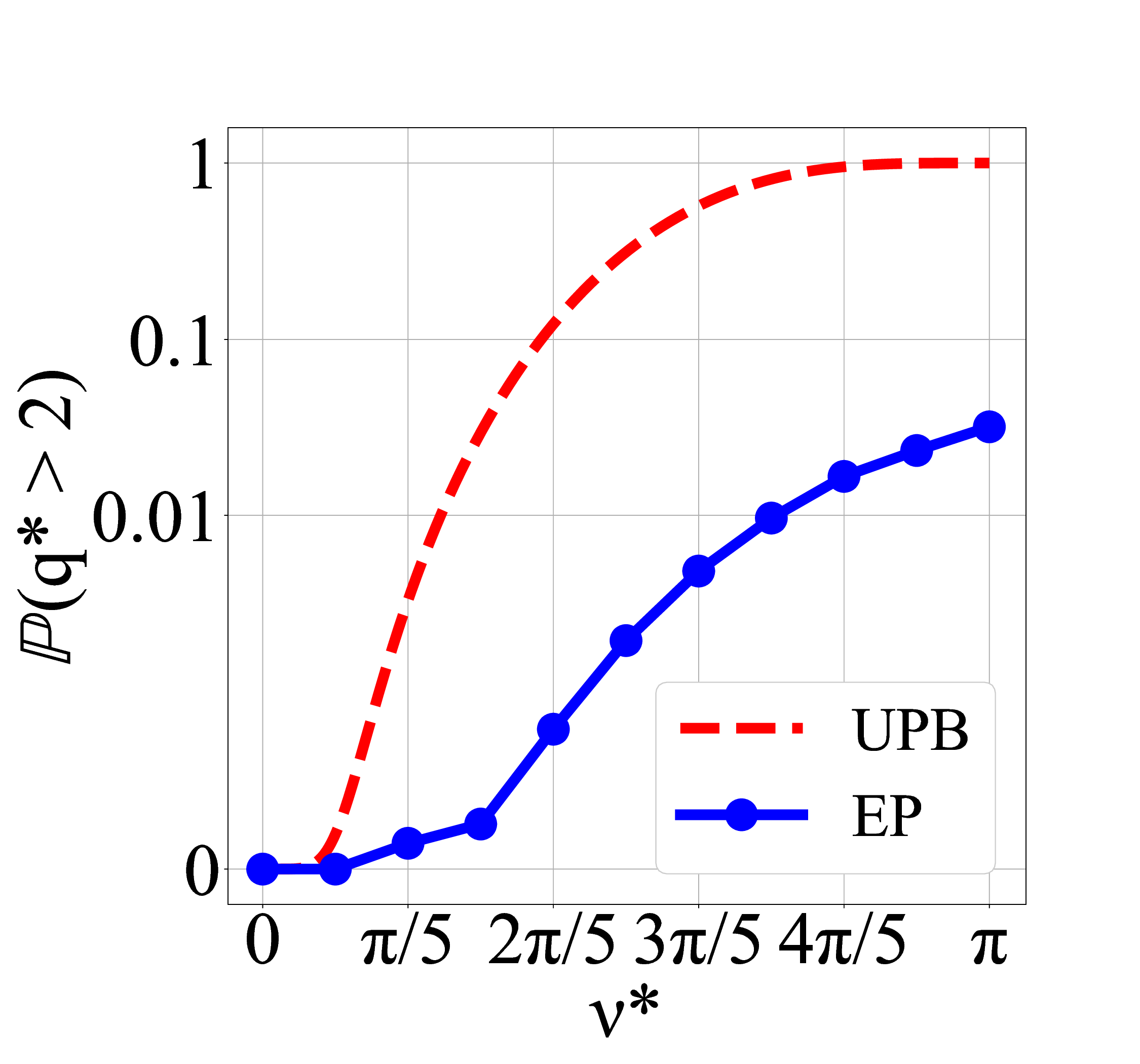}
  \caption{}
  \label{subfig:TrussExample2_LogProbCurves}
\end{subfigure}%
\begin{subfigure}{.45\textwidth}
  \centering
  \includegraphics[width=\linewidth, trim=0 0 -1.25in 0, clip]{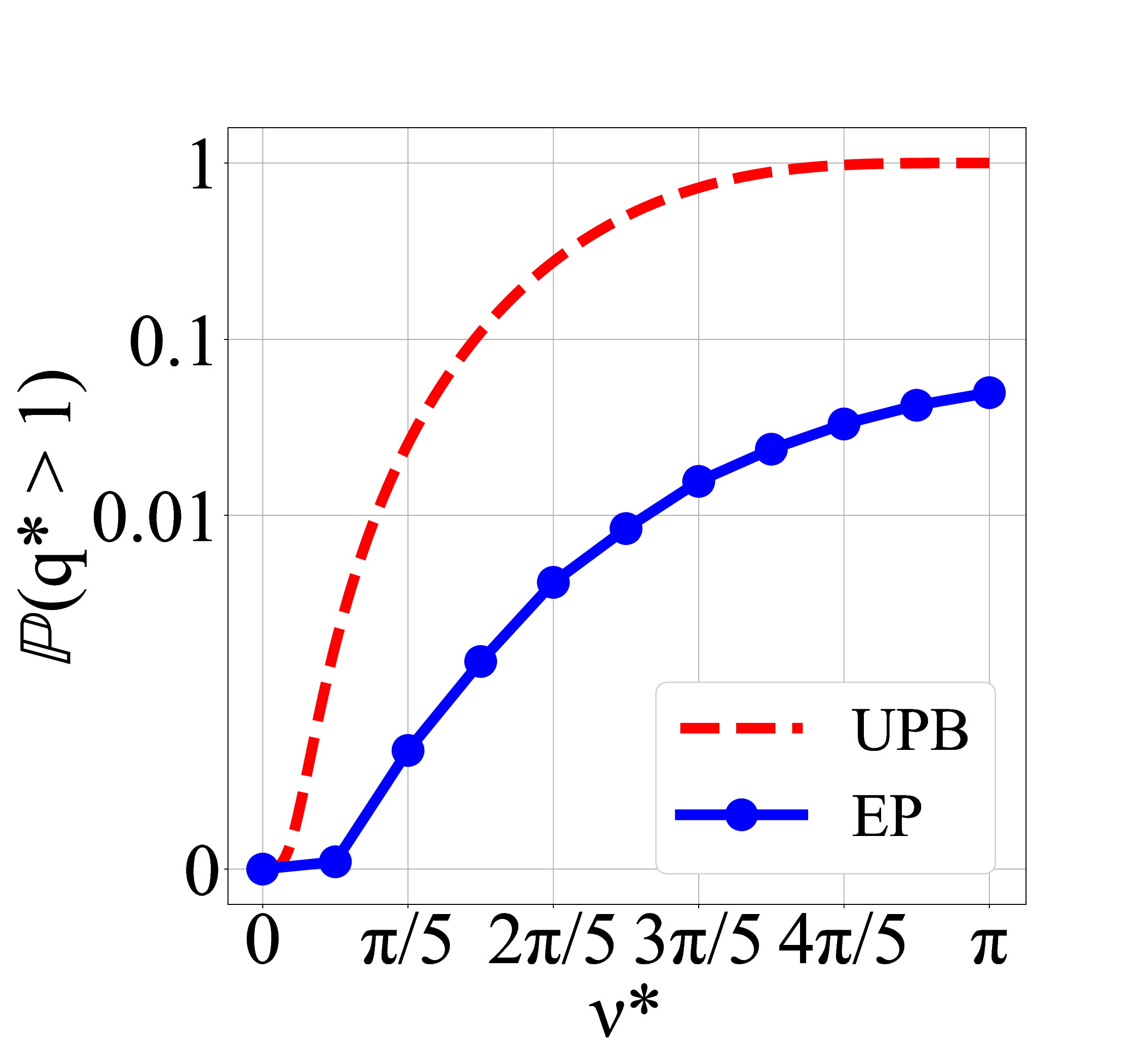}
  \caption{}
  \label{subfig:TrussExample3_LogProbCurves}
\end{subfigure}
\caption{Comparison between upper probability bounds (UPBs) and estimated probabilities (EPs) of violating volume constraints for truss structures (a) \#1 and \#3, and (b) \#2}
\label{fig:probcurves_trussstructures}
\end{figure*}

The estimated constraint violation probabilities and the upper probability bounds from Eq. (\ref{eq:maxnubound}) are shown in Figure \ref{fig:probcurves_trussstructures} for the three truss structures.  In both plots, the upper probability bounds and estimated probabilities increase as $\nu^*$ increases from 0 to $\pi$. This implies that the volume constraints are not as strictly enforced when $\nu^*$ is large. For all values of $\nu^*$, the probability of volume constraint violation is smaller than the theoretical upper bound. 
Without sampling, the theoretical probability upper bounds provide the guidance to choose the values of $\nu^*$ in setting the volume constraints. In this example, we set the upper bound of constraint violation probability to be $0.5$. Accordingly, $\nu^*$ is set to 1.801 for truss structures \#1 and \#3, whereas $\nu^*$ is set to 1.571 for truss structure \#2. Notice that although the upper bound of probability is set to be 0.5, the actual probabilities of constraint violation are small. For truss structures \#1 and \#3, the actual probability is 0.004. For truss structure \#2, the actual probability is 0.0084.

\begin{figure*}[t!]
\centering
\begin{subfigure}{0.35\textwidth}
  \centering 
  \includegraphics[width=\linewidth, trim=0 0 0 0, clip]{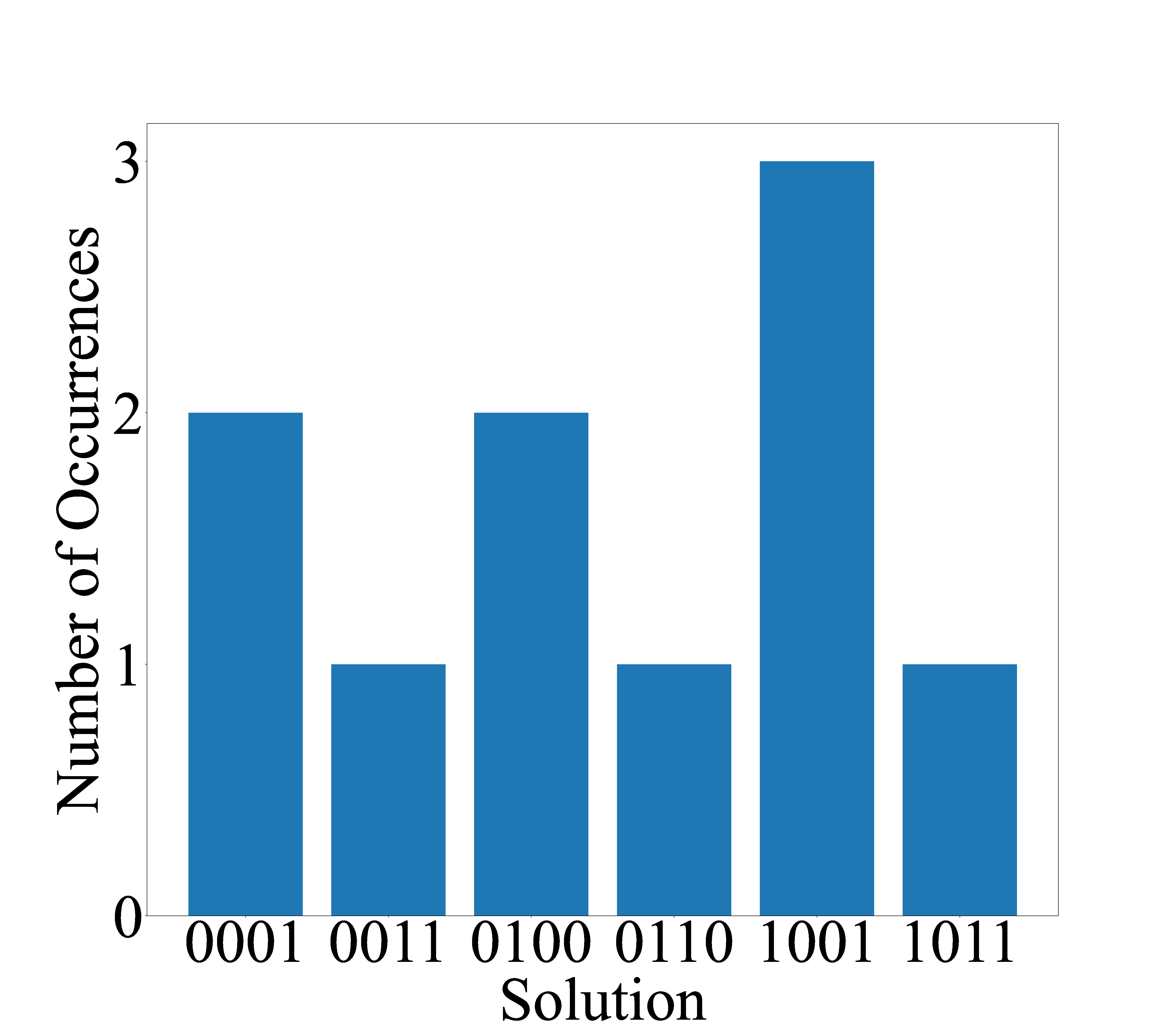}
  \caption{}
  \label{subfig:optimalsols_trussstructure1_maxprob001}
\end{subfigure}
\begin{subfigure}{0.22\textwidth}
  \centering
    \includegraphics[width=\linewidth, trim=0 0 0 0, clip]{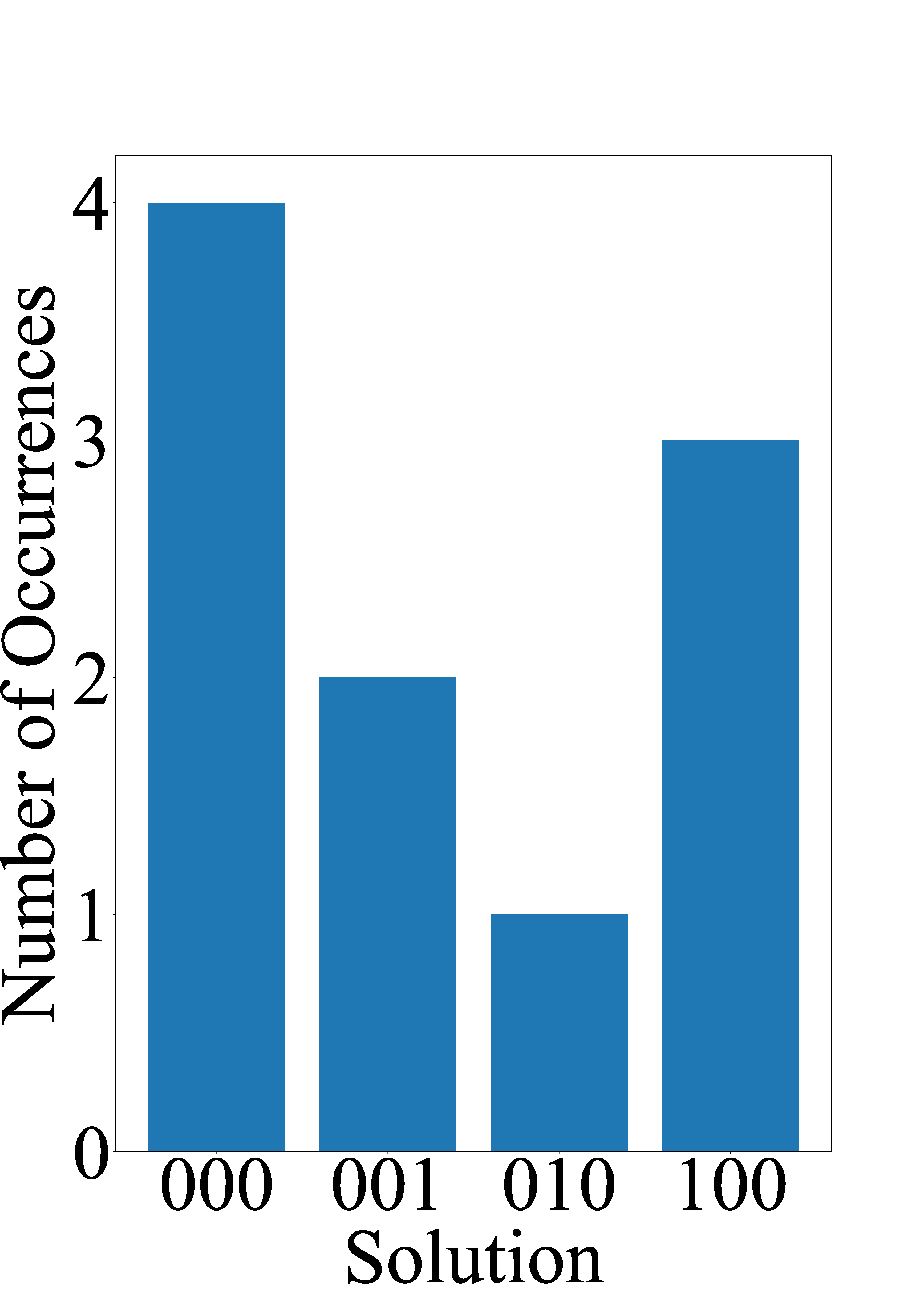}
  \caption{}
  \label{subfig:optimalsols_trussstructure2_maxprob001}
\end{subfigure}
\begin{subfigure}{0.4\textwidth}
  \centering
  \includegraphics[width=\linewidth, trim=0in 0 0in 0, clip]{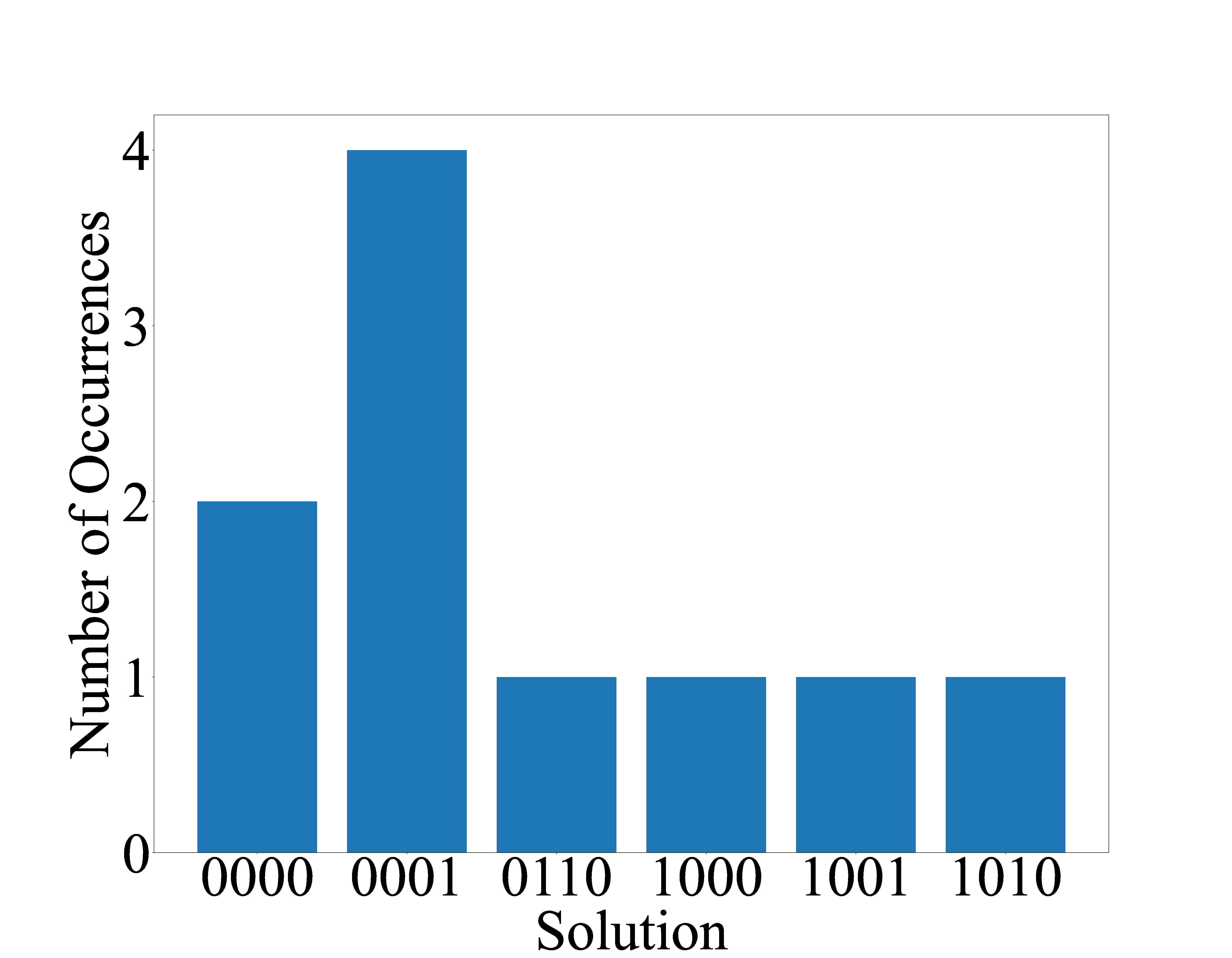}
  \caption{}
  \label{subfig:optimalsols_trussstructure3_maxprob001}
\end{subfigure}
\caption{Distributions of the optimal configurations for truss structures (a) \#1, (b) \#2, and (c) \#3 with volume constraints}
\label{fig:trussstructures_optimalsols_volconstraint}
\end{figure*}

The VQA is performed with the noiseless simulator from Qiskit. The distributions of optimal configurations are shown in Figure \ref{fig:trussstructures_optimalsols_volconstraint}. The VQA is able to find the true optimal configurations for all three structures with success probabilities ranging from 10\% to 30\%. The true optimal configurations are $q_1q_2q_3q_4 = 1001$ for truss structure \#1, $q_1q_2q_3 = 001$ and $q_1q_2q_3 = 010$ for truss structure \#2, and $q_1q_2q_3q_4 = 0110$ for truss structure 
\#3. In Figure \ref{subfig:optimalsols_trussstructure1_maxprob001}, three of the ten runs result in $q_1q_2q_3q_4 = 1001$. In Figure \ref{subfig:optimalsols_trussstructure2_maxprob001}, two runs result in $q_1q_2q_3 = 001$, whereas one run results in $q_1q_2q_3 = 010$. In Figure \ref{subfig:optimalsols_trussstructure3_maxprob001}, one run results in $q_1q_2q_3q_4 = 0110$. In addition, among all three distributions of optimal configurations, only one run results in a configuration which violates the volume constraint. This  run results in $q_1q_2q_3q_4 = 1011$ for truss structure \#1, which contains three optional elements.

\subsection{Compliance minimization of MBB beams} \label{subsec:beamexample}

\begin{figure*}[ht!]
\centering
\begin{subfigure}{.45\textwidth}
  \centering
  \includegraphics[width=0.9\linewidth, trim=1.6in 1in 1.6in 0.8in, clip]{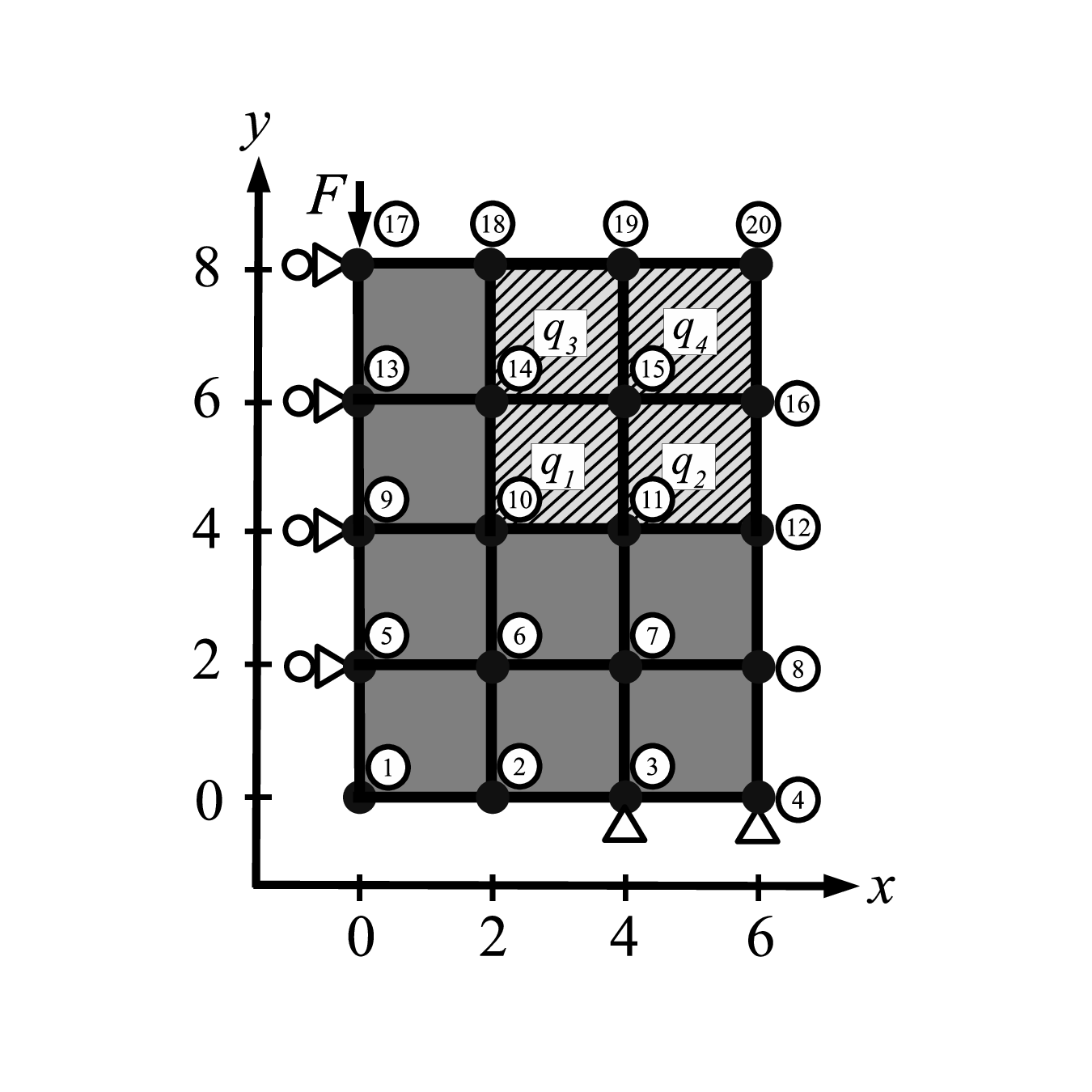}
  \caption{}
  \label{subfig:MBBstructure1}
\end{subfigure}%
\begin{subfigure}{.55\textwidth}
  \centering
  \includegraphics[width=\linewidth, trim=1.25in 1in 1in 1.6in, clip]{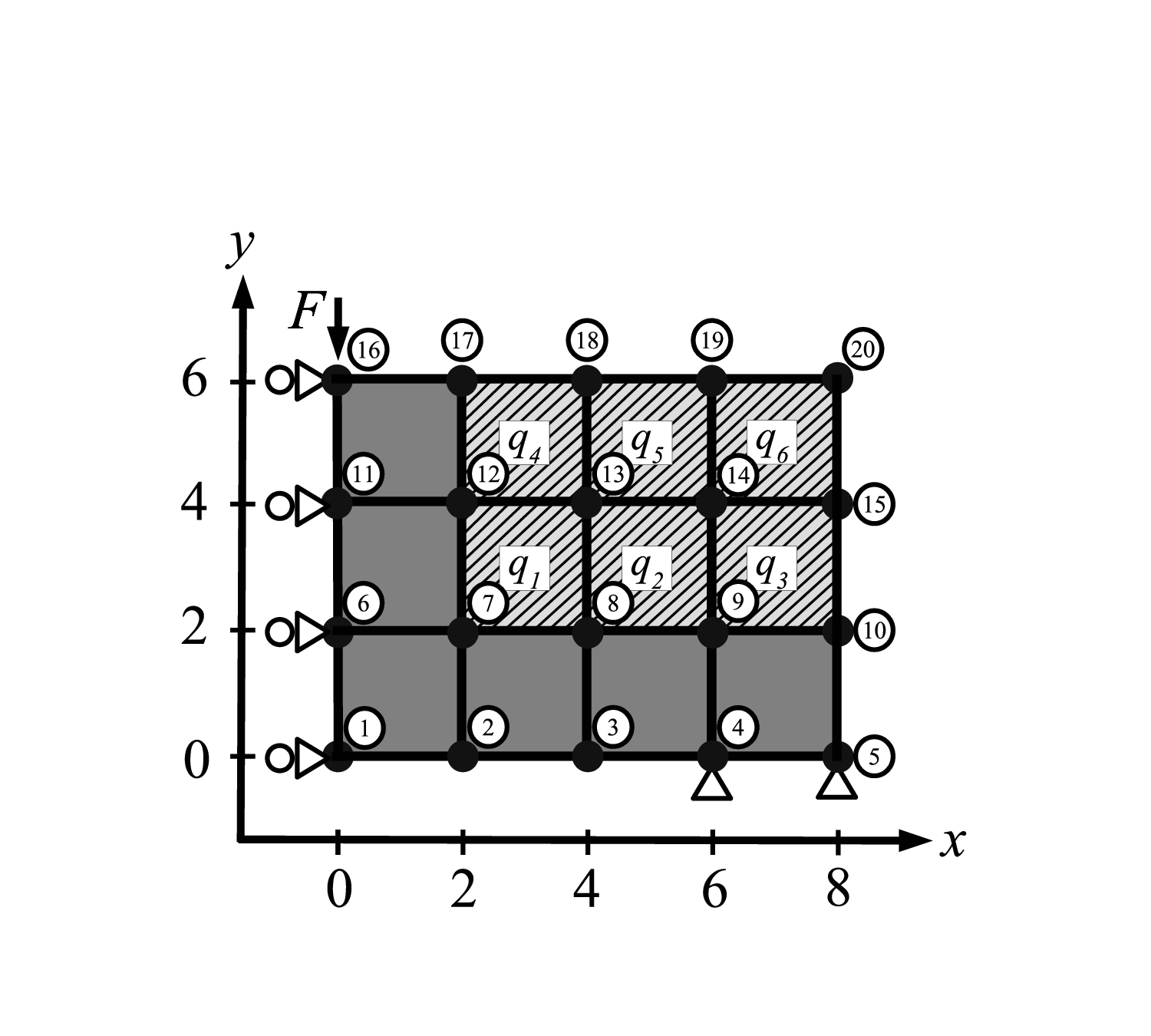}
  \caption{}
  \label{subfig:MBBstructure2}
\end{subfigure}
\caption{Illustrations of MBB beams (a) \#1 and (b) \#2}
\label{fig:MBBbeamstructures}
\end{figure*}

As the second example, the problems of compliance minimization for two MBB beams in Figure \ref{fig:MBBbeamstructures} are solved.  In Figure \ref{subfig:MBBstructure1}, beam \#1 consists of twelve square elements arranged as a $4\times 3$  rectangular grid. The side length of each element is 2 in. Eight elements, which are depicted as solid gray regions, comprise the base structure. The other four elements, which are depicted as striped regions, are optional elements. Four roller supports and two pin supports are located along the left and bottom edges, respectively. A vertical load of 200,000 kips is applied at node 17. In Figure \ref{subfig:MBBstructure2}, the twelve square elements in beam \#2, each with a side length of 2 in, are arranged as a $3 \times 4$  rectangular grid. Six elements comprise the base structure, whereas the other six elements are optional. Similar to beam \#1, four roller supports and two pin supports are located along the left and bottom edges. A vertical load of 200,000 kips is applied at node 16.

For beam \#1, one volume constraint and one connectivity constraint are considered. For the volume constraint, the maximum number of optional elements allowed to be included is two. For the connectivity constraint, $q_4$ must be zero if $q_2$ and $q_3$ are both zeros.  The implementation of this connectivity constraint is shown in Figure \ref{fig:MBB_QC}. Beam \#2 involves one volume constraint and two connectivity constraints. For the volume constraint, the maximum number of optional elements is two. For the first connectivity constraint, $q_5$ must be zero if $q_2$ and $q_4$ are both zeros. For the second connectivity constraint, $q_6$ must be zero if $q_3$ and $q_5$ are both zeros.

\begin{figure*}[ht!] 
\centering
\begin{subfigure}{.5\textwidth}
  \centering
  \includegraphics[width=\linewidth, trim=0 0 -1.25in 0, clip]{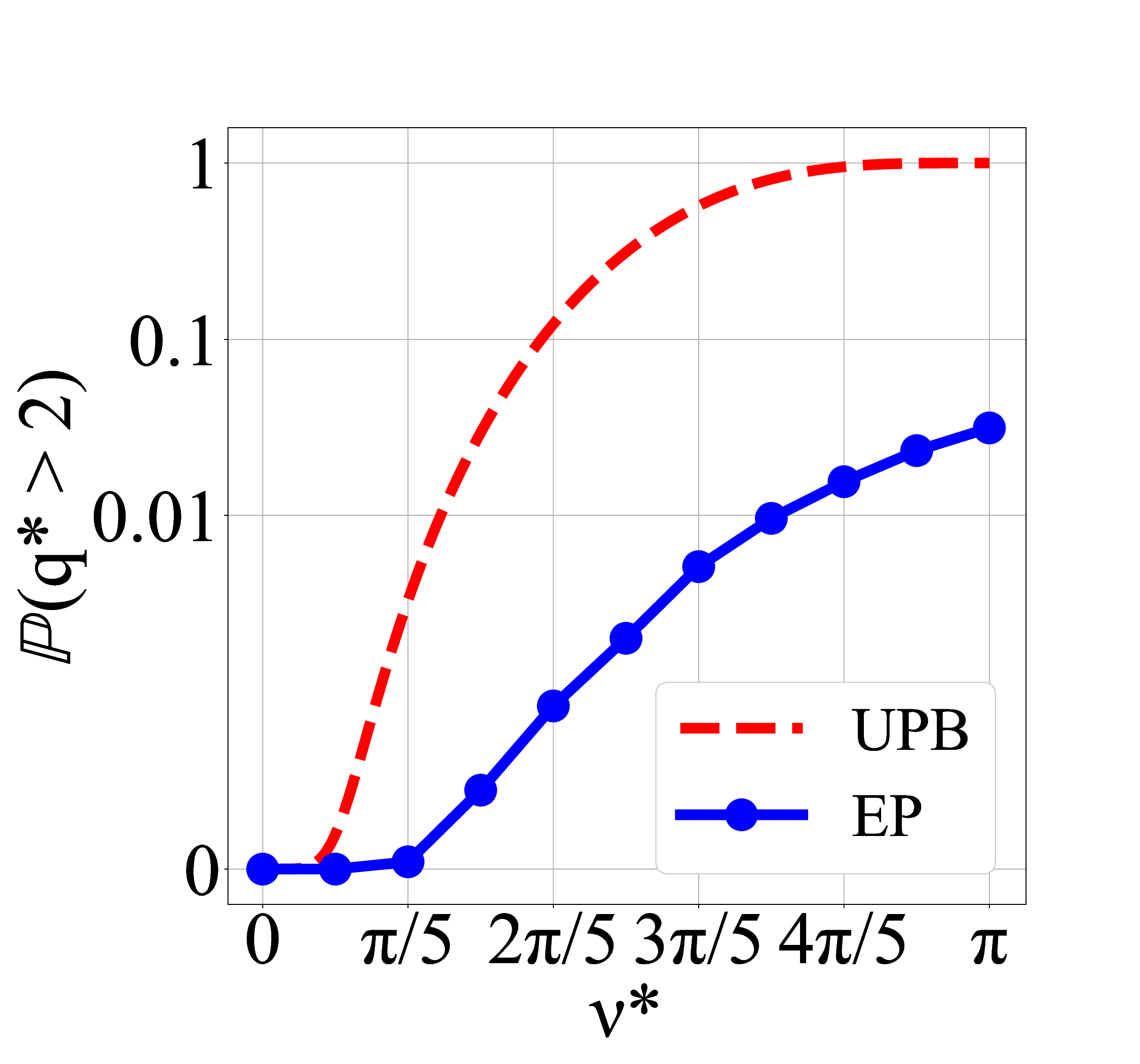}
  \caption{}
  \label{subfig:MBBExample1_LogProbCurves}
\end{subfigure}%
\begin{subfigure}{.5\textwidth}
  \centering
  \includegraphics[width=\linewidth, trim=0 0 -1.25in 0, clip]{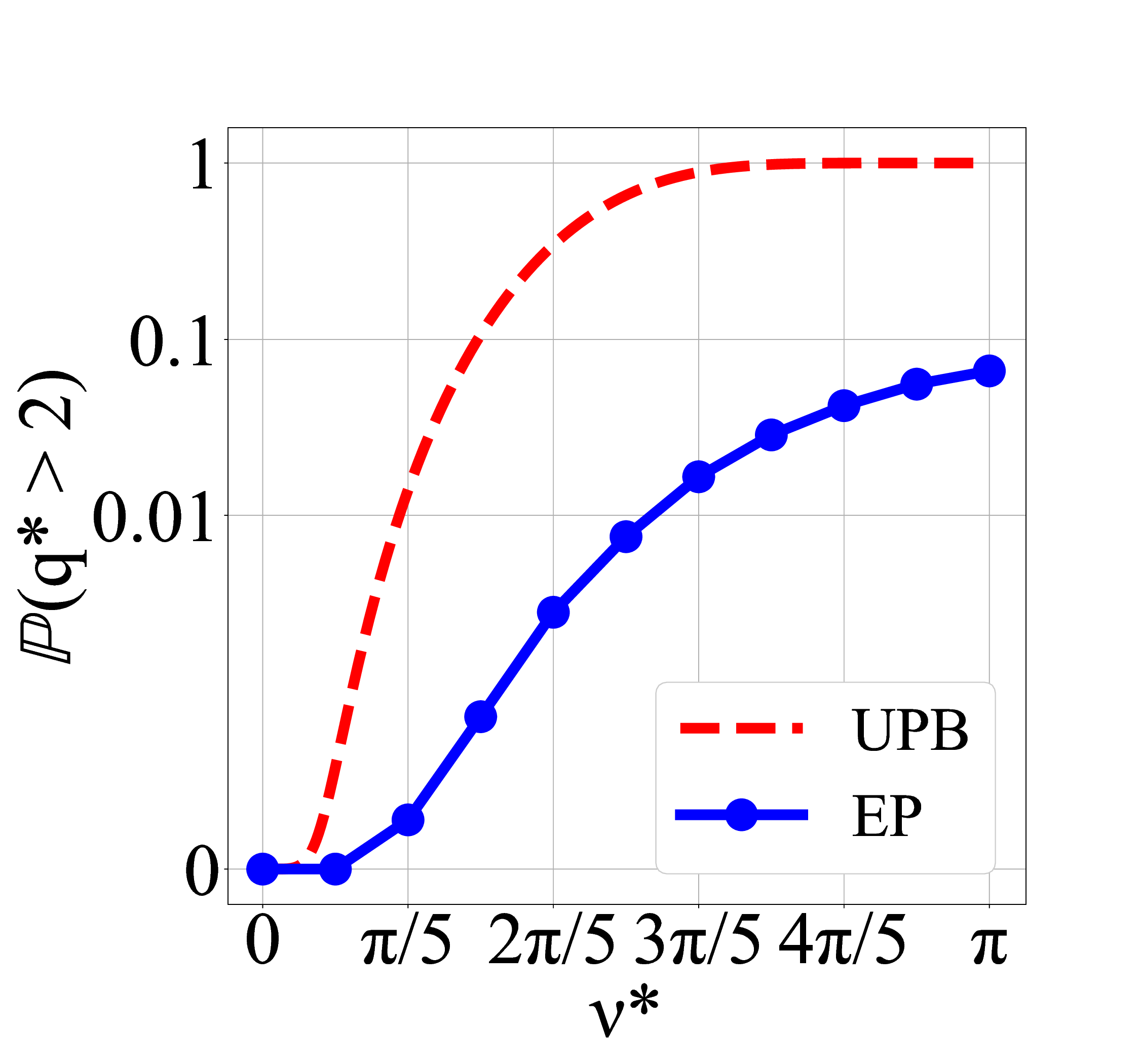}
  \caption{}
  \label{subfig:MBBExample2_LogProbCurves}
\end{subfigure}
\caption{Comparison between upper probability bounds (UPBs) and estimated  probabilities (EPs) of violating volume constraints for MBB beams (a) \#1 and (b) \#2}
\label{fig:probcurves_MBBstructures}
\end{figure*}

Similar to the previous truss problems, the upper bound of volume constraint violation is set to be $0.5$ for the MBB problems. For the purpose of illustration, the probabilities of violation and the upper bounds are similarly compared in Figure \ref{fig:probcurves_MBBstructures}.  
Based on the probability upper bound of $0.5$, the values of parameter upper bounds $\nu^*$'s are set to be $1.571$ for MBB beam \#1 and $1.413$ for beam \#2.  Accordingly, the values of $\xi$ in the quantum circuit are $-1.571$ for beam \#1 and $-1.413$ for beam \#2 to reduce the probability of measuring infeasible configurations.

\begin{figure*}[t!]
\centering
\begin{subfigure}{0.42\textwidth}
  \centering 
  \includegraphics[width=\linewidth, trim=5.5in 7in -3in 0in, clip]{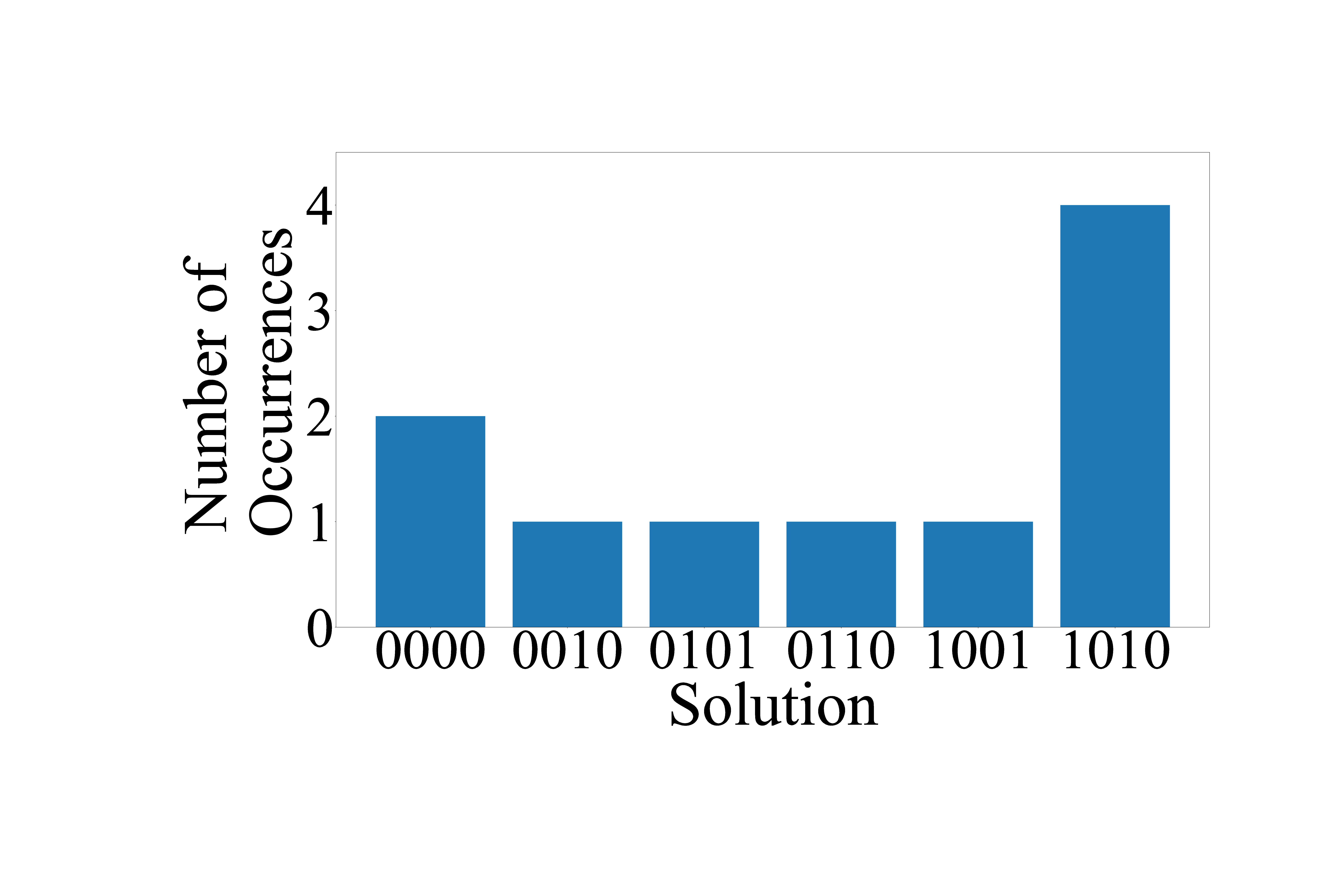}
  \caption{}
  \label{subfig:optimalsols_mbbstructure1}
\end{subfigure}%
\begin{subfigure}{0.58\textwidth}
  \centering 
  \includegraphics[width=\linewidth, trim=12in 7in -1.5in 0in, clip 
  ]{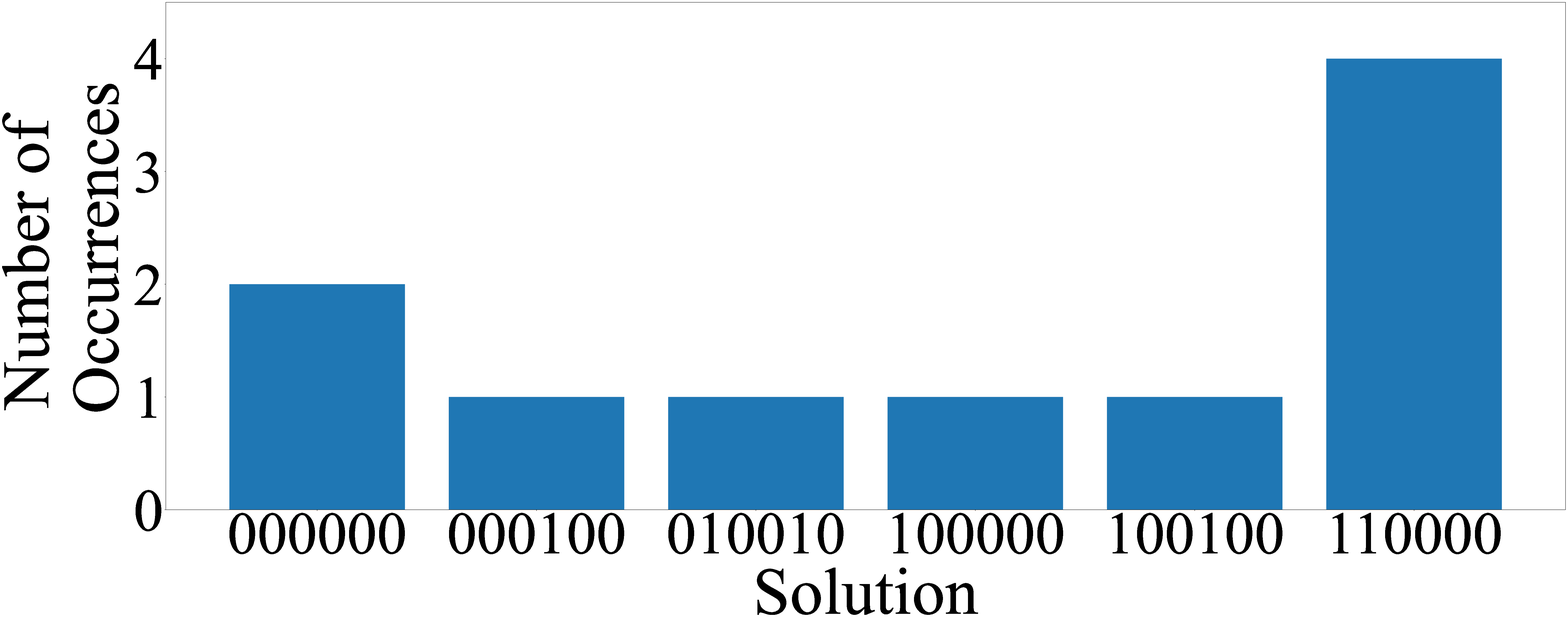}
  \caption{}
  \label{subfig:optimalsols_mbbstructure2}
\end{subfigure}
\caption{Distributions of the optimal configurations for MBB beams  (a) \#1 and (b) \#2 after 50 iterations}
\label{fig:mbbstructures_optimalsols}
\end{figure*}

Ten runs of the VQA are performed, where each run consists of 50 optimization iterations. The initial values of $\lambda$ and $\mu$ are $0.5$ and $5$, respectively. The values of the Lagrange multipliers in the beam examples are larger than those in the truss examples since it is more difficult to solve the PDE constraints. The value of $\lambda$ increases by $0.1$ after each iteration for both beams. After each iteration, the value of $\mu$ increases by $0.5$ and $0.1$ for beams \#1 and \#2, respectively. The change in $\mu$ is smaller for beam \#2 since the number of possible solutions is larger than that of beam \#1. 

Figure \ref{fig:mbbstructures_optimalsols} shows the distributions of optimal configurations. In Figure \ref{subfig:optimalsols_mbbstructure1}, four out of the runs result in $q_1q_2q_3q_4 = 1010$, which is the true optimal configuration for beam \#1 with volume constraints. In Figure \ref{subfig:optimalsols_mbbstructure2}, four runs result in $q_1q_2q_3q_4q_5q_6 = 110000$, which is the true optimal feasible configuration for beam \#2 with volume constraints. Overall, the success probability is 40\%. It is also observed that none of the optimal configurations contain more than two optional elements. Therefore, the VQA with the selected values of $\nu^*$ can  enforce the volume constraints when the number of optional elements is between four and six.

\section{Discussions and Conclusions} 
\label{sec:conclusions}

In this paper, a new variational quantum algorithm for constrained topology optimization is proposed to find the optimal material configuration and solve the PDE constraints simultaneously. The single-loop optimization is achieved by constructing the controlled problem Hamiltonian operators so that the two registers are entangled. The number of gates scales with a factor of $\mathcal{O}\left(v^2\log_2 \left({2}^{1/2} \Vert Q^{(\cdot)} \Vert_{\mathrm{max}} \right) \mathrm{poly}(n)\right)$. This factor depends on the sparsity of stiffness matrices, the degrees of element connectivity, and the number of qubits. Typically, the stiffness matrices are sparse and the degrees of element connectivity are small for two- and three-dimensional structures. Therefore, the number of gates scales polynomially with respect to the number of qubits.

The connectivity and volume constraints in the optimization problems are also incorporated in the proposed VQA. The probability upper bound of volume constraints is derived to provide the guidance for setting the upper bound of the rotation gate parameters. The advantage of this encoding scheme for volume constraints is its simplicity where the constraints on VQA parameters can be easily enforced on the classical computer side. Encoding the connectivity constraints, however, is more complex because circuits need to be customized with the entanglement between qubits. 

The proposed VQA is demonstrated with compliance minimization problems involving three truss structures and two MBB beams. The results show that the optimization is sensitive to the hyperparameters including the Lagrange multipliers and the Bayesian optimization settings. How to set up the hyperparameters to improve the fidelity and efficiency requires further studies. Across all five examples, the VQA is implemented with at most 12 qubits and 22 Hermitian operators in the problem Hamiltonian. Solving TO problems with larger structures requires more qubits and deeper circuits. Future work will also focus on reducing the number of qubits and the circuit depth of the VQA.

\bibliographystyle{unsrt}
\bibliography{references}

\begin{thebibliography}{10}

\bibitem{bendsoe2013topology}
Martin~Philip Bendsøe and Ole Sigmund.
\newblock {\em Topology optimization: theory, methods, and applications}.
\newblock Springer Science \& Business Media, 2013.

\bibitem{okkels2007scaling}
Fridolin Okkels and Henrik Bruus.
\newblock Scaling behavior of optimally structured catalytic microfluidic reactors.
\newblock {\em Physical Review E}, 75(1):016301, 2007.

\bibitem{lin2016enhanced}
Zin Lin, Adi Pick, Marko Lon{\v{c}}ar, and Alejandro~W. Rodriguez.
\newblock Enhanced spontaneous emission at third-order \uppercase{D}irac exceptional points in inverse-designed photonic crystals.
\newblock {\em Physical Review Letters}, 117(10):107402, 2016.

\bibitem{christiansen2019topological}
Rasmus~E. Christiansen, Fengwen Wang, and Ole Sigmund.
\newblock Topological insulators by topology optimization.
\newblock {\em Physical Review Letters}, 122(23):234502, 2019.

\bibitem{fujii2020dc}
Garuda Fujii and Youhei Akimoto.
\newblock {DC} electric cloak concentrator via topology optimization.
\newblock {\em Physical Review E}, 102(3):033308, 2020.

\bibitem{xu2020topology}
Zi-xiang Xu, Hao Gao, Yu-jiang Ding, Jing Yang, Bin Liang, and Jian-chun Cheng.
\newblock Topology-optimized omnidirectional broadband acoustic ventilation barrier.
\newblock {\em Physical Review Applied}, 14(5):054016, 2020.

\bibitem{hagishita2009topology}
Takao Hagishita and Makoto Ohsaki.
\newblock Topology optimization of trusses by growing ground structure method.
\newblock {\em Structural and Multidisciplinary Optimization}, 37:377--393, 2009.

\bibitem{blasques2012multi}
Jos{\'e}~Pedro Blasques and Mathias Stolpe.
\newblock Multi-material topology optimization of laminated composite beam cross sections.
\newblock {\em Composite Structures}, 94(11):3278--3289, 2012.

\bibitem{wang2013simulating}
Yan Wang.
\newblock Simulating stochastic diffusions by quantum walks.
\newblock In {\em International Design Engineering Technical Conferences and Computers and Information in Engineering Conference}, volume 55898, page V03BT03A053. American Society of Mechanical Engineers, 2013.

\bibitem{wang2016accelerating}
Yan Wang.
\newblock Accelerating stochastic dynamics simulation with continuous-time quantum walks.
\newblock In {\em International Design Engineering Technical Conferences and Computers and Information in Engineering Conference}, volume 50183, page V006T09A053. American Society of Mechanical Engineers, 2016.

\bibitem{wang2014global}
Yan Wang.
\newblock Global optimization with quantum walk enhanced \uppercase{G}rover search.
\newblock In {\em International Design Engineering Technical Conferences and Computers and Information in Engineering Conference}, volume 46322, page V02BT03A027. American Society of Mechanical Engineers, 2014.

\bibitem{wang2023opportunities}
Yan Wang, Jungin~E. Kim, and Krishnan Suresh.
\newblock Opportunities and challenges of quantum computing for engineering optimization.
\newblock {\em Journal of Computing and Information Science in Engineering}, 23(6):060817, 2023.

\bibitem{bauer2020quantum}
Bela Bauer, Sergey Bravyi, Mario Motta, and Garnet Kin-Lic Chan.
\newblock Quantum algorithms for quantum chemistry and quantum materials science.
\newblock {\em Chemical Reviews}, 120(22):12685--12717, 2020.

\bibitem{von2021quantum}
Vera von Burg, Guang~Hao Low, Thomas H{\"a}ner, Damian~S. Steiger, Markus Reiher, Martin Roetteler, and Matthias Troyer.
\newblock Quantum computing enhanced computational catalysis.
\newblock {\em Physical Review Research}, 3(3):033055, 2021.

\bibitem{cao2018potential}
Yudong Cao, Jhonathan Romero, and Al{\'a}n Aspuru-Guzik.
\newblock Potential of quantum computing for drug discovery.
\newblock {\em IBM Journal of Research and Development}, 62(6):6:1--6:20, 2018.

\bibitem{morita2008mathematical}
Satoshi Morita and Hidetoshi Nishimori.
\newblock Mathematical foundation of quantum annealing.
\newblock {\em Journal of Mathematical Physics}, 49(12):125210, 2008.

\bibitem{albash2018adiabatic}
Tameem Albash and Daniel~A. Lidar.
\newblock Adiabatic quantum computation.
\newblock {\em Reviews of Modern Physics}, 90(1):015002, 2018.

\bibitem{ye2023quantum}
Zisheng Ye, Xiaoping Qian, and Wenxiao Pan.
\newblock Quantum topology optimization via quantum annealing.
\newblock {\em IEEE Transactions on Quantum Engineering}, 4:3100515, 2023.

\bibitem{honda2024development}
Rio Honda, Katsuhiro Endo, Taichi Kaji, Yudai Suzuki, Yoshiki Matsuda, Shu Tanaka, and Mayu Muramatsu.
\newblock Development of optimization method for truss structure by quantum annealing.
\newblock {\em Scientific Reports}, 14(1):13872, 2024.

\bibitem{farhi2014quantum}
Edward Farhi, Jeffrey Goldstone, and Sam Gutmann.
\newblock A quantum approximate optimization algorithm.
\newblock {\em arXiv preprint arXiv:1411.4028}, 2014.

\bibitem{hadfield2019quantum}
Stuart Hadfield, Zhihui Wang, Bryan O’Gorman, Eleanor~G. Rieffel, Davide Venturelli, and Rupak Biswas.
\newblock From the quantum approximate optimization algorithm to a quantum alternating operator ansatz.
\newblock {\em Algorithms}, 12(2):34, 2019.

\bibitem{peruzzo2014variational}
Alberto Peruzzo, Jarrod McClean, Peter Shadbolt, Man-Hong Yung, Xiao-Qi Zhou, Peter~J. Love, Al{\'a}n Aspuru-Guzik, and Jeremy~L. O’Brien.
\newblock A variational eigenvalue solver on a photonic quantum processor.
\newblock {\em Nature Communications}, 5(1):4213, 2014.

\bibitem{kim2023topology}
Jungin~E. Kim and Yan Wang.
\newblock Topology optimization with quantum approximate {B}ayesian optimization algorithm.
\newblock In {\em ASME International Design Engineering Technical Conferences and Computers and Information in Engineering Conference}, volume 87318, page V03BT03A034, 2023.

\bibitem{kim2023quantum}
Jungin~E. Kim and Yan Wang.
\newblock Quantum approximate \uppercase{B}ayesian optimization algorithms with two mixers and uncertainty quantification.
\newblock {\em IEEE Transactions on Quantum Engineering}, 4:3102817, 2023.

\bibitem{sato2023quantum}
Yuki Sato, Ruho Kondo, Satoshi Koide, and Seiji Kajita.
\newblock Quantum topology optimization of ground structures for near-term devices.
\newblock In {\em 2023 IEEE International Conference on Quantum Computing and Engineering (QCE)}, volume~1, pages 168--176. IEEE, 2023.

\bibitem{hen2016driver}
Itay Hen and Marcelo~S. Sarandy.
\newblock Driver \uppercase{H}amiltonians for constrained optimization in quantum annealing.
\newblock {\em Physical Review A}, 93(6):062312, 2016.

\bibitem{fernandez2021hybrid}
Samuel Fern{\'a}ndez-Lorenzo, Diego Porras, and Juan~Jos{\'e} Garc{\'\i}a-Ripoll.
\newblock Hybrid quantum--classical optimization with cardinality constraints and applications to finance.
\newblock {\em Quantum Science and Technology}, 6(3):034010, 2021.

\bibitem{djidjev2023logical}
Hristo~N. Djidjev.
\newblock Logical qubit implementation for quantum annealing: augmented \uppercase{L}agrangian approach.
\newblock {\em Quantum Science and Technology}, 8(3):035013, 2023.

\bibitem{montanez2024unbalanced}
Alejandro Monta\~nez Barrera, Dennis Willsch, Alberto Maldonado-Romo, and Kristel Michielsen.
\newblock Unbalanced penalization: a new approach to encode inequality constraints of combinatorial problems for quantum optimization algorithms.
\newblock {\em Quantum Science and Technology}, 9(2):025022, 2024.

\bibitem{wang2020xy}
Zhihui Wang, Nicholas~C. Rubin, Jason~M. Dominy, and Eleanor~G. Rieffel.
\newblock {XY} mixers: analytical and numerical results for the quantum alternating operator ansatz.
\newblock {\em Physical Review A}, 101(1):012320, 2020.

\bibitem{hadfield2022analytical}
Stuart Hadfield, Tad Hogg, and Eleanor~G. Rieffel.
\newblock Analytical framework for quantum alternating operator ans{\"a}tze.
\newblock {\em Quantum Science and Technology}, 8(1):015017, 2022.

\bibitem{leipold2021constructing}
Hannes Leipold and Federico~M. Spedalieri.
\newblock Constructing driver \uppercase{H}amiltonians for optimization problems with linear constraints.
\newblock {\em Quantum Science and Technology}, 7(1):015013, 2021.

\bibitem{wang2021quantum}
Yan Wang.
\newblock A quantum approximate \uppercase{B}ayesian optimization algorithm for continuous problems.
\newblock In {\em IIE Annual Conference. Proceedings}, pages 235--240. Institute of Industrial and Systems Engineers (IISE), 2021.

\bibitem{Qu_2024}
Dengke Qu, Edric Matwiejew, Kunkun Wang, Jingbo Wang, and Peng Xue.
\newblock Experimental implementation of quantum-walk-based portfolio optimization.
\newblock {\em Quantum Science and Technology}, 9(2):025014, 2024.

\bibitem{diezvalle2023multiobjectivevqa}
Pablo Díez-Valle, Jorge Luis-Hita, Senaida Hernández-Santana, Fernando Martínez-García, Álvaro Díaz-Fernández, Eva Andrés, Juan~José García-Ripoll, Escolástico Sánchez-Martínez, and Diego Porras.
\newblock Multiobjective variational quantum optimization for constrained problems: an application to cash handling.
\newblock {\em Quantum Science and Technology}, 8(4):045009, 2023.

\bibitem{suzuki1976generalized}
Masuo Suzuki.
\newblock Generalized \uppercase{T}rotter's formula and systematic approximants of exponential operators and inner derivations with applications to many-body problems.
\newblock {\em Communications in Mathematical Physics}, 51(2):183--190, 1976.

\bibitem{kirby2021variational}
William~M. Kirby and Peter~J. Love.
\newblock Variational quantum eigensolvers for sparse \uppercase{H}amiltonians.
\newblock {\em Physical Review Letters}, 127(11):110503, 2021.

\bibitem{wiebe2011simulating}
Nathan Wiebe, Dominic~W Berry, Peter H{\o}yer, and Barry~C Sanders.
\newblock Simulating quantum dynamics on a quantum computer.
\newblock {\em Journal of Physics A: Mathematical and Theoretical}, 44(44):445308, 2011.

\end{thebibliography}

\end{document}